\theoremstyle{plain}
\newtheorem{proposition}{Proposition}
\newtheorem{theorem}[proposition]{Theorem}
\newtheorem{corollary}[proposition]{Corollary}
\newtheorem{lemma}[proposition]{Lemma}
\newcommand{\mmodels}{\Vdash}
\newcommand{\diax}{{\ensuremath{\Diamond_\alpha}\xspace}}
\newcommand{\boxx}{{\ensuremath{\Box_\alpha}\xspace}}
\newcommand{\Rx}{\mathrel{R_\alpha}}
\newcommand{\KnBool}{\ensuremath{\operatorname{\mathbf{K_N^{Bool}}}}\xspace}
\newcommand{\Kn}{\ensuremath{\operatorname{\mathbf{K_N}}}\xspace}
\newcommand{\KnHorn}{\ensuremath{\operatorname{\mathbf{K_N^{Horn}}}}\xspace}
\newcommand{\KnHornBox}{\ensuremath{\operatorname{\mathbf{K_N^{Horn,\Box}}}}\xspace}
\newcommand{\KnHornDia}{\ensuremath{\operatorname{\mathbf{K_N^{Horn,\Diamond}}}}\xspace}
\newcommand{\KnKrom}{\ensuremath{\operatorname{\mathbf{K_N^{Krom}}}}\xspace}
\newcommand{\KnKromBox}{\ensuremath{\operatorname{\mathbf{K_N^{Krom,\Box}}}}\xspace}
\newcommand{\KnKromDia}{\ensuremath{\operatorname{\mathbf{K_N^{Krom,\Diamond}}}}\xspace}
\newcommand{\KnCore}{\ensuremath{\operatorname{\mathbf{K_N^{core}}}}\xspace}
\newcommand{\KnCoreBox}{\ensuremath{\operatorname{\mathbf{K_N^{core,\Box}}}}\xspace}
\newcommand{\KnCoreDia}{\ensuremath{\operatorname{\mathbf{K_N^{core,\Diamond}}}}\xspace}
\newcommand{\HS}{\ensuremath{\mathbf{HS}}\xspace}
\newcommand{\LTL}{\ensuremath{\mathbf{LTL}}\xspace}
\newcommand{\QBF}{\ensuremath{\mathbf{QBF}}\xspace}
\newcommand{\Gr}{\ensuremath{\mathbf{GR(1)}}\xspace}
\newcommand{\NP}{\textsc{NP}}
\newcommand{\Pt}{\textsc{P}}
\newcommand{\NLOG}{\textsc{NLogSpace}}
\newcommand{\PSpace}{\textsc{PSpace}}
\title{On the Expressive Power of Sub-Propositional Fragments of Modal Logic\thanks{The authors acknowledge the support from the Italian INdAM -- GNCS Project 2016 \emph{`Logic, automata and games for self-adaptive systems'} (D. Bresolin and G. Sciavicco), and the Spanish Project \emph{TIN15-70266-C2-P-1} (E. Mu\~noz-Velasco).}}
\author{Davide Bresolin
\institute{Department of Computer Science and Engineering \\  University of Bologna (Italy)}
\email{davide.bresolin@unibo.it}
\and
Emilio Mu\~noz-Velasco
\institute{Department of Applied Mathematics\\ University of M\'alaga (Spain)}
\email{emilio@ctima.uma.es}
\and
Guido Sciavicco
\institute{Department of Mathematics and Computer Science\\ University of Ferrara (Italy)}
\email{guido.sciavicco@unife.it}
}
\begin{document}

\maketitle

\begin{abstract}
Modal logic is a paradigm for several useful and applicable formal systems in computer science.
It generally retains the low complexity of classical propositional logic, but notable exceptions exist in the domains of description, temporal, and spatial logic, where the most expressive formalisms have a very high complexity or are even undecidable.
In search of computationally well-behaved fragments, clausal forms and other sub-propositional restrictions of temporal and description logics have been recently studied.
This renewed interest on sub-propositional logics, which mainly focus on the complexity of the various fragments, raise natural questions on their the relative expressive power, which we try to answer here for the basic multi-modal logic \Kn. We consider the Horn and the Krom restrictions, as well as the combined restriction (known as the core fragment) of modal logic, and, orthogonally, the fragments that emerge by disallowing boxes or diamonds from positive literals. We study the problem in a very general setting, to ease transferring our results to other meaningful cases.
\end{abstract}

\section{Introduction}\label{sec:intro}
%!TEX root =  gandalf2016.tex

The usefulness and the applicability of modal logic is well-known and accepted. Propositional modal logic generally retains the decidability of the satisfiability problem of classical propositional logic, but extends its language with {\em existential modalities} ({\em diamonds}, to express {\em possibility}) and their {\em universal} versions ({\em boxes}, to express {\em necessity}), allowing one to formalize a much wider range of situations.
To simply cite a few, modal logic has been applied not only to philosophical reasoning (e.g., epistemological, or metaphysical reasoning - see~\cite[Chapter~1]{modal-logic} for an historical perspective), but also to computer science, being paradigmatic of the whole variety of description logics~\cite{Baader:2003:DLH:885746}, temporal logics~\cite{temporal_logic_foundations}, and spatial logics~\cite{spatial_logic_handbook}.

Until very recently, clausal fragments of modal logic has received little or no attention, with the exception of a few works which are limited to the Horn fragment~\cite{ChenLin94,del1987note,nguyen2004complexity}. An inversion in this tendency is mainly due to the newborn interest in sub-propositional fragments of temporal description logics~\cite{DBLP:journals/tocl/ArtaleKRZ14}, temporal logics~\cite{DBLP:conf/lpar/ArtaleKRZ13}, and interval temporal logics~\cite{artale2015,DBLP:conf/jelia/BresolinMS14}. Such results, which mainly concern the complexity of various sub-propositional fragments of description and temporal logics raise natural questions on their the relative expressive power, which we try to answer here in a very general form. 

There are two standard ways to weaken the classical propositional language based on the clausal form of formulas:
the {\em Horn fragment}, that only allows clauses with at most one positive literal~\cite{horn}, and the {\em Krom fragment}, that only allows clauses with at most two (positive and negative) literals~\cite{krom}. The {\em core fragment} combines both restrictions. Orthogonally, one can restrict a modal language in clausal form by disallowing either diamonds or boxes in positive literals, obtaining weaker fragments that we call, respectively, the {\em box fragment} and {\em diamond fragment}. By combining these two levels of restrictions, one may obtain several sub-propositional fragments of modal logic, and, by extensions, of description, temporal, and spatial logics. 
The interest in such fragments is originated by the quest of computationally well-behaved logics, and by the observation that meaningful statements can be still expressed under the sub-propositional restrictions. The satisfiability problem for classical propositional Horn logic is $\Pt$-complete~\cite{cookhorn}, while for classical propositional Krom logic (also known as the 2-SAT problem) it is $\NLOG$-complete~\cite{papa}, and the same holds for the core fragment. Interestingly enough, the satisfiability problem for quantified propositional logic (\QBF), which is \PSpace-complete in its general form, becomes \Pt\ when formulas are restricted to binary (Krom) clauses~\cite{asp79}. 

Sub-propositional modal logic has been studied mainly under the Horn restriction. The basic modal logic $\mathbf K$, which is \PSpace-complete, remains so under the Horn restriction, but the satisfiability problem for other cases becomes computationally easier, such as $\mathbf{S5}$, which goes from being \NP-complete to \Pt-complete~\cite{ChenLin94,del1987note}. In~\cite{DBLP:conf/lpar/ArtaleKRZ13,ChenLin93}, the authors study different sub-propositional fragments of Linear Temporal Logic (\LTL). By excluding the Since and Until operators from the language, and keeping only the Next/Previous-time operators and the Future and Past box modalities, it is possible to prove that the Krom and core fragments are \NP-hard, while the Horn fragment is still \PSpace-complete (the same complexity of the full language). Moreover, the complexity of the Horn, Krom, and core fragments without Next/Previous-time operators range from \NLOG\ (core), to \Pt\ (Horn), to \NP-hard (Krom). Where only a universal (anywhere in time) modality is allowed their complexity is even lower (from \NLOG\ to \Pt). Temporal extensions of the description logic DL-Lite have been studied under similar sub-propositional restrictions, and similar improvements in the complexity of various problems have been found~\cite{DBLP:journals/tocl/ArtaleKRZ14}. Sub-propositional fragments of the undecidable interval temporal logic \HS~\cite{interval_modal_logic}, have also been studied. The Horn, Krom, and core restrictions of \HS\ are still undecidable~\cite{DBLP:conf/jelia/BresolinMS14}, but weaker restrictions have shown positive results. In particular, the Horn fragment of \HS\ without diamonds becomes $\Pt$-complete in two interesting cases~\cite{artale2015,bresolin:2016}: when it is interpreted over dense linear orders, and when the semantics of its modalities becomes reflexive. On the bases of these results, sub-propositional interval temporal extensions of description logics have been introduced in~\cite{artale2015}. 
Other clausal forms of temporal logics, not included in the above classification, have been developed 
to synthesize systems from logical specifications, as the logical counterpart of deterministic automata. 
The most relevant example is the fragment \Gr\ of \LTL~\cite{DBLP:journals/jcss/BloemJPPS12}, for which synthesis is exponentially more efficient than for full \LTL.

%\medskip

The purpose of this paper is to consider sub-propositional fragments of the multi-modal logic \Kn, and study their relative expressive power in a systematic way. We consider two different notions of relative expressive power for fragments of modal logic, and we provide several results that give rise to two different hierarchies among them, leaving only a few open problems. To the best of our knowledge, this is the first work where sub-Krom and sub-Horn fragments of \Kn have been considered.

%This paper is structured as follows: in Section~\ref{sec:prel} we give the necessary preliminaries, in Section~\ref{sec:kn_expr} we consider the Horn, Krom and core fragments of \Kn, and in Section~\ref{sec:kn_expr2} we refine our results to include sub-propositional fragments of \Kn without boxes or diamonds, before concluding.

\section{Preliminaries}\label{sec:prel}
%!TEX root =  gandalf2016.tex

Let us fix a unary modal {\em similarity type} as the set $\tau$ of modalities $\alpha_1,\alpha_2,\ldots,\alpha_N\in\tau$, and a denumerable set
$\mathcal P$ of propositional letters. The {\em modal language} \Kn associated to $\tau$ and $\mathcal P$ contains all and only the formulas generated by the following grammar:

\begin{equation}
\varphi::=\top\mid p\mid \neg\varphi\mid \varphi\vee\varphi\mid \diax\varphi\mid \boxx\varphi,\label{boolgrammar}
\end{equation}

\noindent where $p\in\mathcal P$, and $\alpha\in\tau$ labels the {\em diamond} $\diax$ and {\em box} $\boxx$. Other classical operators, such as $\rightarrow$ and $\wedge$, can be considered as abbreviations. A {\em Kripke} $\tau$-{\em frame} is a relational $\tau$-structure $\mathcal F=(W,\{R\}_{\alpha\in\tau})$, where the elements of $W\neq\emptyset$ are called {\em possible worlds}, and, for each $\alpha\in\tau$, $R_\alpha\in W\times W$ is an {\em accessibility} relation. A {\em Kripke structure} over the $\tau$-frame $\mathcal F$ is a pair $M=(\mathcal F,V)$, where $V:W\rightarrow 2^\mathcal P$ is an {\em evaluation function}, and we say that $M$ {\em models} $\varphi$ at the world $w$, denoted by $M,w\mmodels\varphi$, if and only if:

\medskip

\begin{compactitem}
\item $\varphi=\top$;
\item $p\in V(w)$, if $\varphi = p$;
\item $M,w\not\mmodels\psi$, if $\varphi= \neg\psi$;
\item $M,w\mmodels\psi$ or $M,w\mmodels\xi$, if $\varphi= \psi\vee\xi$;
\item There exists $v$ such that $wR_\alpha v$ and $M,v\mmodels\psi$, if $\varphi= \diax\psi$.
\item For every $v$ such that $wR_\alpha v$, it is the case that $M,v\mmodels\psi$, if $\varphi= \boxx\psi$.
\end{compactitem}

\medskip

\noindent In this case, we say that $M$ is a {\em model} of $\varphi$; in the following, we (improperly) use the terms models and structures as synonyms. %It is well-known~\cite{modal-logic} that for \Kn\ (and most of its axiomatic extensions) any satisfiable formula $\varphi$ is satisfied at the root of a {\em rooted model}, defined as a pair $(M,w)$ where $M,w\mmodels\varphi$ and the worlds in $M$ are all reachable from $w$. We use both notions of model (of a formula) and rooted model, whenever convenient.

In order to define sub-propositional fragments of \Kn we start from the {\em clausal form} of \Kn-formulas, whose building blocks are the {\em positive literals}:

\begin{equation}
\lambda  ::= \top \mid p\mid \diax \lambda \mid \boxx \lambda,\label{poslit}
\end{equation}

\noindent and we say that $\varphi$ is in {\em clausal form} if it can be generated by the following grammar:

\begin{equation}
\varphi  ::= \lambda \mid \neg\lambda\mid \nabla (\neg\lambda_1\vee\neg\lambda_2\vee\ldots\vee\neg\lambda_n\vee\lambda_{n+1}\lor\lambda_{n+2}\vee\ldots\vee\lambda_{n+m})\mid \varphi \land \varphi,\label{clause}
\end{equation}

\noindent where $\nabla=\underbrace{\Box_{\alpha_i}\Box_{\alpha_j}\ldots}_{s}$ and $s\ge 0$. Sometimes, we write clauses in their implicative form:

\vspace{-0.5\baselineskip}
\begin{equation*}
\nabla(\lambda_1\land\ldots\land\lambda_n\rightarrow\lambda_{n+1}\vee\ldots\vee\lambda_{n+m}),
\end{equation*}
 
\noindent and we use $\bot$ as a shortcut for $\neg\top$. By $md(\lambda)$ we mean the {\em modal depth} of $\lambda$, that is, the number of boxes and diamonds in $\lambda$. {\em Sub-propositional} fragments of \Kn can be now defined by constraining the cardinality and the structure of clauses: the fragment of \Kn in clausal form where each clause in (\ref{clause}) is such that $m\le 1$ is called {\em Horn} fragment, and denoted by \KnHorn, and when each clause is such that $n+m\le 2$ it is called {\em Krom} fragment, and it is denoted by \KnKrom. When both restrictions apply we denote the resulting fragment, the {\em core} fragment, by \KnCore. We use \KnBool instead of \Kn\ to highlight that no restrictions apply. It is also interesting to study the fragments that can be obtained from both the Horn and the Krom fragments by disallowing, respectively, the use of \boxx or \diax in positive literals. In this way, the fragment of \KnHorn obtained by eliminating the use of diamonds (resp., boxes) in (\ref{poslit}) is denoted by \KnHornBox (resp., \KnHornDia). By applying the same restrictions to \KnKrom and \KnCore, one obtains the pair \KnKromDia and \KnKromBox from the former, and the pair \KnCoreDia and \KnCoreBox, from the latter. All such sub-Horn, sub-Krom, and sub-core fragments are generally called {\em box} and {\em diamond} fragments. 

\medskip

It should be noted that in the literature there is no unified definition of the different modal or temporal sub-propositional logics. Our definition follows the one by Nguyen~\cite{nguyen2004complexity}, with a notable difference: while the definition of clauses is the same, we choose a more restrictive definition of what is a formula. Hence, a formula of \KnHorn by our definition is also a Horn formula by ~\cite{nguyen2004complexity}, but not vice versa. However, since every Horn formula by~\cite{nguyen2004complexity} can be transformed into a conjunction of Horn clauses, the two definitions are equivalent. The definition of~\cite{ChenLin94,del1987note} is equivalent to that of Nguyen, and hence to our own. Other approaches force clauses to be quantified using a {\em universal} modality that asserts the truth of a formula in every world of the model. The universal modality is either assumed in the language~\cite{DBLP:conf/lpar/ArtaleKRZ13} or it is definable using the other modalities~\cite{bresolin:2016,DBLP:conf/jelia/BresolinMS14}, but the common choice in the literature of modal (non-temporal) logic is simply excluding the universal modality. Our results hold in either case: when the universal modality is present (as part of the language or defined), and clauses are always universally quantified, they become even easier to prove.

There are many ways to compare the expressive power of different modal languages. In our context, two different concepts of expressive equivalence arise naturally.
The first one, that we call \emph{weak expressivity}, compares formulas (and models) with the same set of propositional letters. More formally, given two modal logics $\mathbf L$ and $\mathbf L'$ interpreted in the same class of relational frames $\mathcal{C}$, we say that $\mathbf L'$ is {\em weakly at least as expressive as} $\mathbf L$ if, fixed a propositional alphabet $\mathcal P$,
there exists an effective translation $(\cdot)'$ from $\mathbf L$ to
$\mathbf L'$ such that for every model
$M$ in $\mathcal{C}$, world $w$ in $M$, and formula $\varphi$ of $\mathbf L$,
we have $M, w \mmodels \varphi$ if and only if  $M, w \mmodels \varphi'$. We denote this situation with $\mathbf L\preceq_{\mathcal C}^w\mathbf L'$, and we omit $\mathcal C$ if it is clear from the context. The second notion, that we call \emph{strong expressivity}, allows the translations to use a finite number of new propositional letters, and can be formally defined as follows. For every model $M=(\mathcal F,V)$ based on the set of propositional letters $\mathcal P$ and every $\mathcal P'\supseteq\mathcal P$, we say that the model $M^{\mathcal P'}=(\mathcal F,V^{\mathcal P'})$ based on $\mathcal P'$ is a {\em extension} of $M$ if $V|_{\mathcal P}=V'|_{\mathcal P}$.
Then, we say that $\mathbf L'$ is {\em at least as expressive as} $\mathbf L$ if there exists an effective translation $(\cdot)'$ that transforms any $\mathbf L$-formula $\varphi$ written in the alphabet $\mathcal P$ into a $\mathbf L'$-formula written in a suitable alphabet $\mathcal P'\supseteq\mathcal P$, such that for every model $M$ in $\mathcal{C}$ and world $w$ in $M$, we have that $M, w \mmodels \varphi$ if and only if there exists an extension $M'$ of $M$ such that $M', w \mmodels \varphi'$.  We denote this situation with $\mathbf L\preceq_{\mathcal C}\mathbf L'$.
Now, we can say that $\mathbf L$ and $\mathbf L'$ are {\em weakly equally expressive} if $\mathbf L\preceq_{\mathcal C}^w\mathbf L'$ and $\mathbf L'\preceq_{\mathcal C}^w\mathbf L$, and they are {\em equally expressive}  if $\mathbf L\preceq_{\mathcal C}\mathbf L'$ and $\mathbf L'\preceq_{\mathcal C}\mathbf L$; in the former case we write $\mathbf L\equiv^w\mathbf L'$, and in the latter case we write $\mathbf L\equiv\mathbf L'$. Finally, $\mathbf L$ is {\em weakly less expressive than} $\mathbf L'$ if $\mathbf L\preceq_{\mathcal C}^w\mathbf L'$ and $\mathbf L\not\equiv_{\mathcal C}^w\mathbf L'$, and $\mathbf L$ is {\em less expressive than} $\mathbf L'$ if $\mathbf L\preceq_{\mathcal C}\mathbf L'$  and $\mathbf L\not\equiv_{\mathcal C}\mathbf L'$; in the former case we write $\mathbf L\prec_{\mathcal C}^w\mathbf L'$, while in the latter one we write $\mathbf L\prec_{\mathcal C}\mathbf L'$. Clearly, two logics can be equally expressive and not weakly so,
%(e.g., as we shall see in the rest of this paper, \KnKromBox, \KnKromDia, and \KnKrom),
but not the other way around.

Given $\mathbf L$ and $\mathbf L'$ such that $\mathbf L$ is a syntactical fragment of $\mathbf L'$, in order to prove that $\mathbf L$ is (weakly) less expressive than $\mathbf L'$ we show a formula $\psi$ that can be written in $\mathbf L'$ but not in $\mathbf L$. To this end we proceed by contradiction, assuming that a translation $\varphi\in\mathbf L$ does exist, and by building a model for $\psi$ that is not (and, in the case of strong relative expressiveness, cannot be extended to) a model of $\varphi$, following three different strategies: we modify the labeling (Theorem~\ref{th:horn_vs_bool} and Theorem~\ref{th:krom_vs_bool}), we modify the structure (Theorem~\ref{th:krombox_vs_krom} and Theorem~\ref{th:kromdia_vs_boxdia}), or we exploit a property of $\mathbf L'$ that $\mathbf L$ does not possess (Theorem~\ref{th:hornbox_vs_boxdia} and Theorem~\ref{th:horndia_vs_boxdia}). The two different levels that emerged from the above discussion give rise to two different hierarchies: \begin{inparaenum}[\it (i)] \item a {\em weak} hierarchy that compares fragments within the same propositional alphabet, and \item a {\em strong} hierarchy that takes into account any finite extension of the propositional alphabet. \end{inparaenum}

Adding new propositional letters to facilitate translations from a fragment to another is a common practice, for example, to prove that every $n$-ary clause in propositional logic can be transformed into an equi-satisfiable set of ternary clauses. % or to prove that, in modal clausal form, positive literals with modal depth one are always enough.
In this sense, it can be argued that the weak hierarchy is less general; nonetheless, both the weak and the strong hierarchies contribute to the comprehension of the relative expressive power of sub-propositional fragments. Indeed, both notions have been already studied under different names~\cite{DBLP:conf/dlog/KonevLWZ15}: our weak hierarchy captures the notion of {\em equivalently rewritability}, while the strong one captures the notion of {\em model-conservative rewritability}.

%Notice that these two different levels of comparison are also considered, for example, in~\cite{DBLP:conf/dlog/KonevLWZ15}, where they are referred to as {\em equivalent rewritability} (weak equal expressivity) and {\em model-conservative rewritability} (equal expressivity).

%\section{Results}\label{sec:res}
%\input{results}

\section{Horn, Krom, and Core Fragments}\label{sec:kn_expr}
%!TEX root = gandalf2016.tex

In this section, we study the relative expressive power of the basic multi-modal logic \KnBool and its sub-propositional fragments with both boxes and diamonds. From now on, we focus on the class of all relational frames, and we omit it from the notation. We start by comparing the Horn fragment \KnHorn with the full propositional language. %Following~\cite{Nguyen2000}, we make use of a suitable ordering between models of a formula, formally defined as follows. Consider any two models $M_1,M_2$ of the same similarly type $\tau$ based, respectively, on the sets of worlds $W_1,W_2$, and two worlds $w_1\in W_1$ and $w_2\in W_2$. We say that $M_1$ is {\em less than or equal to} $M_2$ {\em relatively to} $w_1$ {\em and} $w_2$ if and only if there exists %

\begin{theorem}\label{th:horn_vs_bool}
$\KnHorn\prec^w\KnBool$.
\end{theorem}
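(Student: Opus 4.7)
The plan is to prove $\KnHorn \preceq^w \KnBool$ trivially (via the identity translation, since $\KnHorn \subseteq \KnBool$ syntactically) and then refute the converse by exhibiting a separating $\KnBool$-formula. The witness will be $\psi = p \lor q$: its two positive literals place it outside $\KnHorn$. Following the paper's hint, I will modify the labeling to reach a contradiction.

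Suppose toward contradiction that some $\varphi \in \KnHorn$ over the alphabet $\{p,q\}$ is weakly equivalent to $\psi$. Work over the single-world frame $\mathcal F = (\{w\}, \{R_\alpha = \emptyset\}_{\alpha \in \tau})$ in which $w$ has no successors, and consider the three models $M_i = (\mathcal F, V_i)$ with $V_1(w) = \{p\}$, $V_2(w) = \{q\}$, and $V_3(w) = V_1(w) \cap V_2(w) = \emptyset$. Then $M_1, w \mmodels \psi$ and $M_2, w \mmodels \psi$ but $M_3, w \not\mmodels \psi$. The core step is to show $M_1, w \mmodels \varphi$ and $M_2, w \mmodels \varphi$ imply $M_3, w \mmodels \varphi$, contradicting the supposed equivalence. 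I plan to establish this by structural inspection: at the dead-end world $w$, every Horn clause $\nabla(\neg\lambda_1 \lor \ldots \lor \neg\lambda_n \lor \lambda_{n+1})$ is either vacuously true (when $\nabla$ is non-empty, when some body literal $\lambda_i$ starts with $\diax$, or when the head $\lambda_{n+1}$ starts with $\boxx$), or else it reduces to a classical propositional Horn clause whose literals are drawn from $\{\top, p, q\}$. Classical propositional Horn formulas enjoy the well-known closure under intersection of valuations, and the same property extends componentwise to conjunctions of clauses and to the base cases $\lambda$ and $\neg\lambda$ of the Horn grammar, yielding $M_3, w \mmodels \varphi$ and completing the contradiction.

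The main obstacle I expect is the careful bookkeeping in the structural case-analysis — in particular, handling arbitrary alternations of $\diax$ and $\boxx$ inside positive literals, the $\nabla$-prefix, and the atomic units $\lambda, \neg\lambda$ admitted by the Horn grammar — together with verifying that no positive literal of the form $\diax \mu$ or $\boxx \mu$ can help distinguish $M_3$ from $M_1$ and $M_2$ (both are determined by the frame alone at a dead-end world, hence coincide across all three models). Once the collapse to propositional Horn at $w$ is in place, the contradiction is immediate from the classical closure-under-intersection argument.
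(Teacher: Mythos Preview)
Your proof is correct, but it follows a genuinely different route from the paper's. The paper also uses $\psi = p \lor q$ as the separating formula, but then argues ``in the other direction'': it starts from an \emph{arbitrary} model $M$ and world $w$ with $M,w\not\mmodels\psi$, finds a violated Horn clause $\varphi_i = \nabla(\lambda_1\wedge\ldots\wedge\lambda_n\to\lambda)$, and case-splits on which propositional letter (if any) occurs in the head $\lambda$. In each case it \emph{enlarges} the valuation (setting $q$, or $p$, or both, true everywhere) so that $\psi$ becomes true while the body of $\varphi_i$ stays satisfied and the head stays falsified; a short induction on $md(\lambda)$ shows the head's truth value is unaffected by the added letter. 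Your argument instead fixes a single dead-end world, observes that all modal literals trivialize there (every $\diax\mu$ is false, every $\boxx\mu$ is true, every non-empty $\nabla$-prefix makes the clause vacuous), so that any \KnHorn-formula collapses to a \emph{propositional} Horn formula over $\{p,q\}$; you then \emph{shrink} the valuation and invoke the classical closure of propositional Horn under intersection of models. Your approach is shorter and more self-contained, and it is morally a pointwise instance of the paper's later Lemma~\ref{lem:intersection} (closure of \KnHornBox under intersection), which you avoid needing because at a dead-end world \KnHorn and \KnHornBox coincide. The paper's approach, on the other hand, does not rely on a special frame and its ``modify the labeling and show the violated clause survives'' template is reused verbatim in the proof of Theorem~\ref{th:krom_vs_bool}. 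One small remark: your displayed Horn clause has exactly one head literal $\lambda_{n+1}$, but the grammar allows $m=0$; the headless case is handled by the same reduction (the clause becomes purely negative propositional, which is still Horn and still closed under intersection).
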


\begin{proof}
Since $\KnHorn$ is a syntactical fragment of $\KnBool$, we know that $\KnHorn\preceq^w\KnBool$. It remains to be proved that there exists a formula that belongs to $\KnBool$ and that cannot be translated to $\KnHorn$ within the same propositional alphabet. Consider the \KnBool-formula
$$\psi\equiv p \lor q,$$

\noindent and suppose, by contradiction, that there exists a \KnHorn-formula $\varphi$ such that for every model $M$  over the propositional alphabet $\{p,q\}$, and every world $w$, we have that $M, w\mmodels \psi$ if and only if $M, w \mmodels \varphi.$ We can assume that $\varphi = \varphi_1 \land \ldots \land \varphi_l$, where each $\varphi_i$ is a positive literal, the negation of a positive literal, or a Horn clause. To simplify our argument, if $\varphi_i=\lambda$ (resp., $\varphi_i=\neg\lambda$) we shall think of it as the clause $(\top\rightarrow\lambda)$ (resp., $(\lambda\rightarrow\bot)$). Let us denote by $C(\varphi_i)$ the set of propositional letters that occur in the consequent of $\varphi_i$: clearly, $C(\varphi_i)$ is always a singleton, or it is the empty set. Now, consider a model $M = \langle \mathcal F, V\rangle$, where $\mathcal F$ is based on the set of worlds $W$, and let $w\in W$ be a world such that $M,w\not\mmodels\psi$. Such a model must exist since $\psi$ is not a tautology. Since $\varphi$ is a conjunction of Horn clauses, we have that there must exist at least one clause $\varphi_i = \nabla(\lambda_1 \land \ldots \land \lambda_n \rightarrow \lambda)$ such that $M, w \not\mmodels \varphi_i$. Hence, there must exist a world $w'$ such that $M, w' \mmodels \lambda_1 \land \ldots \land \lambda_n$ but $M, w' \not\mmodels \lambda$. At this point, only three cases may arise (since we are in a fixed propositional alphabet):

\medskip

\begin{compactitem}
\item $C(\varphi_i)=\{p\}$. In this case, we can build a new model $M' = \langle \mathcal F,V'\rangle$ such that:
$$V'(p)=V(p)~\mbox{and}~V'(q)=W.$$

\noindent Since $q$ holds on every world of the model, we have that $M'$ satisfies $\psi$ on every world, and, in particular, on $w$. However, being $\lambda_1,\ldots,\lambda_n$ positive literals, they are true on $M'$ whenever they were true on $M$, which means that $M',w'\mmodels \lambda_1 \land \ldots \land \lambda_n$. Now, consider the positive literal $\lambda$, we want to prove that, for each world $v\in W$,
$M,v\not\models\lambda$ implies $M',v\not\models\lambda$. We reason by induction on  $md(\lambda)$. If $md(\lambda)=0$, then $\lambda=p$; since $M$ and $M'$ agree on the valuation of the proposition $p$, we have the claim. Suppose, now, that $md(\lambda)>0$. Clearly, $\lambda=\diax\lambda'$ or $\lambda=\boxx\lambda'$; assume, first, that $\lambda=\diax\lambda'$. If $M,v\not\mmodels\diax\lambda'$, then, for every $t\in W$ such that $vR_\alpha t$, we have that $M,t\not\mmodels\lambda'$; by inductive hypothesis, for every $t\in W$ such that $vR_\alpha t$, we have that $M',t\not\mmodels\lambda'$, proving that, in fact, $M',v\not\mmodels\diax\lambda'$. Now, assume that $\lambda=\boxx\lambda'$. If $M,v\not\mmodels\boxx\lambda'$, then for some $t\in W$ such that $vR_\alpha t$ we have that $M,t\not\mmodels\lambda'$; by inductive hypothesis, $M',t\not\mmodels\lambda'$, which implies that $M',v\not\mmodels\boxx\lambda'$. Since $M,w'\not\mmodels\lambda$, the above argument proves that $M',w'\not\models\lambda$, which means that $M',w'\not\models\varphi_i$. This means that $M',w\mmodels\psi$ and $M',w\not\mmodels\varphi$, contradicting the fact that $\psi$ is a translation of $\varphi$.

\item $C(\varphi_i)=\{q\}$. In this case one can apply the same argument as before, by simply switching the roles of $p$ and $q$.

\item $C(\varphi_i)=\emptyset$. In this case, we can build a new model $M' = \langle \mathcal F,V'\rangle$ such that:
$$V'(p) = V'(q) = W.$$

\noindent Since $p$ and $q$ hold on every world of the model, we have that $M'$ satisfies $p\vee q$ everywhere, and, in particular, on $w$. However, since the truth of $\lambda$ does not depend on the valuations of the propositional letters, we have that, as before, $M',w' \mmodels \lambda_1 \land \ldots \land \lambda_n$ but $M',w' \not\models \lambda$, from which we can conclude that $M',w \not\mmodels \varphi$.
\end{compactitem}

\medskip

\noindent Therefore, $\varphi$ cannot exist, and this means that $\psi$ cannot be expressed in \KnHorn within the same propositional alphabet. So, the claim is proved.
\end{proof}

\medskip

Now, we turn our attention to the relationship between $\KnKrom$ and $\KnBool$. %In this case we are only able to prove that $\KnBool$ cannot be translated into $\KnKrom$ within the same propositional alphabet.

\begin{theorem}\label{th:krom_vs_bool}
$\KnKrom \prec^w\KnBool$.
\end{theorem}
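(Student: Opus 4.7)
The plan is to mirror the strategy of Theorem~\ref{th:horn_vs_bool}: since $\KnKrom$ is a syntactic fragment of $\KnBool$ the inclusion $\KnKrom \preceq^w \KnBool$ is immediate, and what remains is to exhibit a $\KnBool$-formula that admits no equivalent $\KnKrom$-formula on the same propositional alphabet. The natural candidate is $\psi \equiv p \vee q \vee r$ on the alphabet $\{p,q,r\}$. The underlying counting principle is that each Krom clause has at most two literals in total, hence its consequent involves at most two propositional letters, whereas $\psi$ mentions three, so by pigeonhole at least one of $p,q,r$ will always be ``missing'' from the consequent of any given Krom clause.

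First I would suppose, for contradiction, that there is an equivalent $\varphi = \varphi_1 \wedge \ldots \wedge \varphi_l$ in $\KnKrom$ over $\{p,q,r\}$, and adopt the same convention as in Theorem~\ref{th:horn_vs_bool} of viewing each $\varphi_i$ as a clause $\nabla(\lambda_1 \wedge \ldots \wedge \lambda_n \rightarrow \lambda_{n+1} \vee \ldots \vee \lambda_{n+m})$ with $n+m \le 2$. Next I would pick any model $M = \langle \mathcal F, V \rangle$ and world $w$ with $M, w \not\mmodels \psi$: by the assumed equivalence some clause $\varphi_i$ must fail at $w$, witnessed by a world $w'$ at which the antecedent holds but the consequent does not. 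Since each positive literal mentions exactly one propositional letter and the consequent contains at most two of them, $|C(\varphi_i)| \le 2$, so there exists $s \in \{p,q,r\} \setminus C(\varphi_i)$.

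The decisive move is to define $M' = \langle \mathcal F, V' \rangle$ with $V'(s) = W$ and $V'(t) = V(t)$ for $t \neq s$: this forces $\psi$ to hold everywhere in $M'$, and in particular at $w$. The main technical step is a straightforward induction on $md(\lambda)$, entirely parallel to the one used in Theorem~\ref{th:horn_vs_bool}, establishing that (i) if $\lambda$ does not mention $s$, its truth value at any world coincides in $M$ and $M'$, and (ii) in general positive literals remain true in $M'$ wherever they were true in $M$. Applying (ii) to the antecedent literals $\lambda_1,\ldots,\lambda_n$ preserves the antecedent at $w'$, while applying (i) to the consequent literals $\lambda_{n+1},\ldots,\lambda_{n+m}$ (none of which mentions $s$, by choice of $s$) preserves their falsity at $w'$; hence $\varphi_i$, and therefore $\varphi$, still fails at $w$ in $M'$, yielding the required contradiction with $M', w \mmodels \psi$.

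I do not expect any real obstacle: the argument is a one-parameter generalization of the Horn case, where the cardinality bound on $C(\varphi_i)$ is raised from $1$ to $2$ and the witness disjunction is correspondingly widened from $p \vee q$ to $p \vee q \vee r$. The degenerate subcases, such as $|C(\varphi_i)| = 0$ or purely negative Krom clauses like $\nabla(\lambda_1 \wedge \lambda_2 \rightarrow \bot)$, are handled uniformly by the same valuation modification, in direct analogy with the three-case breakdown of Theorem~\ref{th:horn_vs_bool}.
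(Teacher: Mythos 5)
Your proof is correct. The overall strategy is the one the paper also uses (assume a \KnKrom-translation $\varphi$ of a witness formula, take a countermodel $M,w$ of the witness, locate a failing clause $\varphi_i$ with a witnessing world $w'$, and modify the valuation of a propositional letter that $\varphi_i$ cannot fully ``see'' so that the witness becomes true at $w$ while $\varphi_i$ still fails), but the details genuinely differ. The paper takes the Horn formula $p\wedge q\rightarrow r$ as witness and applies the pigeonhole to the set $P(\varphi_i)$ of \emph{all} letters occurring in the clause: since a Krom clause has at most two literals, some letter of $\{p,q,r\}$ does not occur in $\varphi_i$ at all, and its valuation is then set to $W$ (if it is $r$) or to $\emptyset$ (if it is $p$ or $q$); because the modified letter is absent from the clause, no monotonicity argument is needed at all. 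You instead take $p\vee q\vee r$ and pigeonhole only on the consequent $C(\varphi_i)$, compensating on the antecedent side with the monotonicity induction of Theorem~\ref{th:horn_vs_bool}; this is a sound hybrid of the two proofs and is arguably more uniform (one case, always setting the missing letter true everywhere, instead of the paper's three cases with two kinds of modification). Two minor remarks: positive literals mention \emph{at most} one letter rather than exactly one ($\top$, $\diax\top$, $\boxx\top$ mention none), which only strengthens your bound $|C(\varphi_i)|\le 2$; and the paper's witness is deliberately chosen to be a \KnHorn-formula, a fact that Corollary~\ref{th:krom_vs_horn} reuses (together with the formula $p\vee q$ from Theorem~\ref{th:horn_vs_bool}) to obtain the $\preceq^w$-incomparability of \KnHorn and \KnKrom and $\KnCore\prec^w\KnKrom,\KnHorn$, whereas your witness $p\vee q\vee r$ is neither Horn nor Krom, so it proves the theorem but would not serve that corollary directly.
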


\begin{proof}
Since $\KnKrom$ is a syntactical fragment of $\KnBool$, we know that $\KnKrom\preceq^w\KnBool$. It remains to be proved that there exists a formula that belongs to $\KnBool$ and that cannot be translated to $\KnKrom$ within the same propositional alphabet. Now, consider the \KnBool-formula

$$\psi \equiv p \land q \rightarrow r,$$

\noindent and suppose, by contradiction, that there exists a \KnKrom-formula $\varphi$, written in the propositional alphabet $\{p,q,r\}$, such that for every model $M$ and every world $w$ we have that $M, w\mmodels \psi$ if and only if $M,w \mmodels \varphi.$ As before, we can assume that $\varphi = \varphi_1 \land \ldots \land \varphi_l$; as in Theorem~\ref{th:horn_vs_bool}, if $\varphi_i$ is a literal, we treat it as a special clause. Let us denote by $P(\varphi_i)$ the set of propositional letters that occur in $\varphi_i$. Now, consider a model $M = \langle \mathcal F, V\rangle$, where $\mathcal F$ is based on the set of worlds $W$, and let $w\in W$ be a world such that $M,w\not\mmodels\psi$. Such a model must exist since $\psi$ is not a tautology. Since $\varphi$ is a conjunction of Krom clauses, we have that there must exist at least one clause $\varphi_i = \nabla(\lambda_1 \vee \lambda_2)$ such that $M, w \not\mmodels \varphi_i$. Hence, there must exist a world $w'$ such that $M, w' \not\mmodels (\lambda_1 \vee \lambda_2)$. At this point, three cases may arise (since we are in a fixed propositional alphabet, and we deal with clauses at most binary):

\medskip

\begin{compactitem}

\item $P(\varphi_i) \subseteq \{p,q\}$. In this case, we can build a new model $M' = \langle \mathcal F,V'\rangle$ such that:

 $$V'(p) = V(p),\ V'(q) = V(q),~\mbox{and}~V'(r) = W.$$

 \noindent Since $r$ holds on every world of the model, we have that $M'$ satisfies $\psi$ everywhere, and in particular on $w$. However, since the valuation of $p$ and $q$ are the same of $M$, and since the relational structure has not changed, we have that $M',w' \not\models \lambda_1 \lor \lambda_2$, from which we can conclude that $M',w \not\mmodels \nabla ( \lambda_1 \lor \lambda_2)$ and thus that $w$ do not satisfy $\varphi$.
	
\item $P(\varphi_i) \subseteq \{p,r\}$. In this case, we can build a new model $M' = \langle \mathcal F,V'\rangle$ such that:

$$V'(p) = V(p),\ V'(r) = V(r),~\mbox{and}~V'(q) = \emptyset.$$

\noindent Since $q$ is false on every world of the model, we have that $M'$ satisfies $\psi$ everywhere, and in particular on $w$. However, since the valuation of $p$ and $r$ are the same of $M$, and since the relational structure has not changed, we have that $M',w' \not\models \lambda_1 \lor \lambda_2$, from which we can conclude that $M',w \not\mmodels \nabla ( \lambda_1 \lor \lambda_2)$ and thus that $w$ do not satisfy $\varphi$.

\item $P(\varphi_i) \subseteq \{q,r\}$. In this case, we can apply the same argument as before, by simply switching the roles of $p$ and $q$.

\end{compactitem}

\medskip

\noindent Therefore, $\varphi$ cannot exist, and this means that $\psi$ cannot be expressed in \KnKrom within the same propositional alphabet.
\end{proof}

\begin{corollary}\label{th:krom_vs_horn}
The following results hold:
\medskip
\begin{compactenum}
\item \KnHorn and \KnKrom are $\preceq^w$-incomparable;
\item $\KnCore\prec^w \KnKrom,\KnHorn$.
\end{compactenum}
\end{corollary}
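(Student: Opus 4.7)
The plan is to obtain both statements as essentially cost-free consequences of Theorems~\ref{th:horn_vs_bool} and~\ref{th:krom_vs_bool}, by reusing the same two witness formulas and exploiting the obvious syntactic inclusions $\KnCore\subseteq\KnHorn\subseteq\KnBool$ and $\KnCore\subseteq\KnKrom\subseteq\KnBool$. The key observation is that the two witnesses used in the preceding theorems are, in fact, already on opposite sides of the Horn/Krom divide: $\psi_1\equiv p\vee q$ is a Krom clause (it has $n+m=2$, with $m=2$, so it is \emph{not} Horn), while $\psi_2\equiv p\wedge q\to r$, i.e.\ $\neg p\vee\neg q\vee r$, is a Horn clause (it has $m=1$, so it is \emph{not} Krom). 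This combinatorial asymmetry is what powers the whole corollary.

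For item (1), to prove $\KnKrom\not\preceq^w\KnHorn$ I would take the \KnKrom-formula $\psi_1\equiv p\vee q$ and observe that Theorem~\ref{th:horn_vs_bool} has already shown that no \KnHorn-formula over the alphabet $\{p,q\}$ is equivalent to $\psi_1$; symmetrically, to prove $\KnHorn\not\preceq^w\KnKrom$ I would take the \KnHorn-formula $\psi_2\equiv p\wedge q\to r$ and invoke Theorem~\ref{th:krom_vs_bool}, which rules out a \KnKrom-translation over $\{p,q,r\}$. Combining the two non-reductions gives the required incomparability.

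For item (2), the inclusions $\KnCore\preceq^w\KnKrom$ and $\KnCore\preceq^w\KnHorn$ are immediate from the fact that every core formula is both Horn and Krom. For strictness I would reuse $\psi_1$ to separate $\KnCore$ from $\KnKrom$, and $\psi_2$ to separate $\KnCore$ from $\KnHorn$: if $\psi_1$ admitted a translation into $\KnCore$ over $\{p,q\}$, that translation would in particular be a \KnHorn-formula, contradicting Theorem~\ref{th:horn_vs_bool}; similarly a \KnCore-translation of $\psi_2$ over $\{p,q,r\}$ would be a \KnKrom-formula, contradicting Theorem~\ref{th:krom_vs_bool}.

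There is no real obstacle here; the only thing to be careful about is to make sure the two witness formulas are indeed legal members of the ``larger'' fragment in each separation (which they are, as noted in the first paragraph), and that the translations in the weak setting preserve the same propositional alphabet, so that the impossibility results of the two theorems apply verbatim without having to redo any of their model-modification arguments.
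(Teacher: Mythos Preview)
Your proposal is correct and mirrors the paper's own proof almost exactly: both use $p\vee q$ (Theorem~\ref{th:horn_vs_bool}) and $p\wedge q\to r$ (Theorem~\ref{th:krom_vs_bool}) as the two witnesses for incomparability, and both derive item~(2) from the identity $\KnCore=\KnHorn\cap\KnKrom$. Your write-up is in fact slightly more explicit than the paper's in spelling out why a hypothetical \KnCore-translation would contradict the earlier theorems, but the argument is the same.
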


\begin{proof}
As we have seen in Theorem~\ref{th:horn_vs_bool}, the \KnKrom-formula $p \lor q$ cannot be translated into \KnHorn within the same propositional alphabet, and, as we have seen in Theorem~\ref{th:krom_vs_bool}, the \KnHorn-formula $p \land q \rightarrow r$ cannot be translated into \KnKrom under the same conditions. These two observations, together, prove that we cannot compare \KnHorn and \KnKrom, under the weak notion of expressivity. As an immediate consequence, since $\KnCore = \KnHorn \cap \KnKrom$, we have that $\KnCore\prec^w \KnHorn$ and $\KnCore\prec^w\KnKrom$.
\end{proof}

\section{Box and Diamond Fragments}\label{sec:kn_expr2}
%!TEX root = gandalf2016.tex

In this section, we study the relative expressive power for box and diamond fragments, starting with sub-Horn fragments without diamonds. First of all, we prove the following useful property of the fragments \KnHornBox and \KnCoreBox. Consider two models $M_1,M_2$ such that all $M_i=(\mathcal F,V_i)$ are based on the same relational frame. We define the {\em intersection} model as the unique model $M_{M_1\cap M_2}=(\mathcal F,V_{V_1\cap V_2})$, where, for each $w\in W$, $V_{V_1\cap V_2}(w)=V_1(w)\cap V_2(w).$

\begin{lemma}\label{lem:intersection}
\KnHornBox\ is closed under intersection of models.
\end{lemma}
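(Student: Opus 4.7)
The plan is to reduce the statement to a clean equivalence about positive literals, exploiting the absence of $\diax$ in the positive-literal grammar of $\KnHornBox$. Let $M_1=(\mathcal F,V_1)$ and $M_2=(\mathcal F,V_2)$ be two models over the same frame, set $M=M_{M_1\cap M_2}$, and suppose $M_1,w\mmodels\varphi$ and $M_2,w\mmodels\varphi$ for a $\KnHornBox$-formula $\varphi$ and some world $w$. The target is $M,w\mmodels\varphi$.

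The first step I would take is to establish the following key lemma: for every positive literal $\lambda$ of $\KnHornBox$ and every world $v$,
\[
M,v\mmodels\lambda \ \iff\ M_1,v\mmodels\lambda \ \text{ and }\ M_2,v\mmodels\lambda.
\]
The argument is by induction on the structure of $\lambda$. The base cases $\lambda=\top$ and $\lambda=p$ are immediate from the definition of $V_{V_1\cap V_2}$. The only inductive case allowed in $\KnHornBox$ is $\lambda=\boxx\lambda'$: since $R_\alpha$ is the same in $M_1$, $M_2$, and $M$, the condition $M,v\mmodels\boxx\lambda'$ amounts to $M,t\mmodels\lambda'$ for every $R_\alpha$-successor $t$ of $v$, and the inductive hypothesis turns this into ``$M_1,t\mmodels\lambda'$ and $M_2,t\mmodels\lambda'$ for every such $t$'', i.e. $M_1,v\mmodels\boxx\lambda'$ and $M_2,v\mmodels\boxx\lambda'$.

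With the lemma in hand, the rest is a routine case analysis on the shape of each conjunct of $\varphi$. For a generic Horn clause $\nabla(\lambda_1\land\ldots\land\lambda_n\to\lambda)$, where $\lambda$ is either a positive literal or $\bot$ and $\nabla$ is a sequence of boxes, I fix any world $v$ reachable from $w$ via $\nabla$; the reachable set is the same in all three structures because the frames coincide. If the premises all hold at $v$ in $M$, the ``only if'' direction of the key lemma lifts them to both $M_i$, whence $M_i,v\mmodels\lambda$ by hypothesis on $\varphi$, and the ``if'' direction then delivers $M,v\mmodels\lambda$; the empty-head case ($\lambda=\bot$) is symmetric and produces a contradiction from assuming all premises hold at $v$ in $M$. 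The degenerate unit forms $\lambda$ and $\neg\lambda$ are handled directly by the two directions of the lemma, and conjunctions $\varphi_1\land\varphi_2$ are immediate.

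The main obstacle is the key lemma itself, and it is precisely here that banning $\diax$ from positive literals is essential: for $\lambda=\diax\lambda'$ the ``if'' direction would fail, since a witness supplied by $M_1$ and one supplied by $M_2$ need not coincide, so no $R_\alpha$-successor of $v$ in $M$ need satisfy $\lambda'$. Once the lemma is in place, no further ideas are needed, and the argument transparently restricts to $\KnCoreBox$ as well.
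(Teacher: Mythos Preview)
Your proof is correct and follows essentially the same approach as the paper: both arguments hinge on the inductive fact that, for diamond-free positive literals, truth in the intersection model is equivalent to truth in both component models. The only difference is presentational: you isolate this fact as a clean biconditional lemma and give a direct proof, whereas the paper argues by contradiction and proves the two directions of the biconditional separately inside the main argument (explicitly for one direction, and with ``a similar inductive argument'' for the other).
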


\begin{proof}
Let $\varphi=\varphi_1\wedge\ldots\wedge\varphi_l$ a \KnHornBox-formula such that $M_1,w\mmodels\varphi$ and $M_2,w\mmodels\varphi$, where $M_1=(\mathcal F,V_1)$ and $M_2=(\mathcal F,V_2)$; we want to prove that $M_{M_1\cap M_2},w\mmodels\varphi$. Suppose, by way of contradiction, that $M_{M_1\cap M_2},w\not\mmodels\varphi$. Then, there must be some $\varphi_i$ such that $M_{M_1\cap M_2},w\not\mmodels\varphi_i$. As in Theorem~\ref{th:horn_vs_bool}, we can assume that $\varphi_i$ is a clause of the type $\nabla(\lambda_1\wedge\ldots\wedge\lambda_n\rightarrow\lambda)$. This means that $M_{M_1\cap M_2},w'\mmodels\lambda_1\wedge\ldots\wedge\lambda_n$ and $M_{M_1\cap M_2},w'\not\mmodels\lambda$ for some $w'$. We want to prove that, for each $1\le j\le n$, both $M_1$ and $M_2$ satisfy $\lambda_j$ at $w'$. To see this, we reason by induction on $md(\lambda_j)$. If $md(\lambda_j)=0$, then $\lambda_j=p$ for some propositional letter $p$; but if $M_{M_1\cap M_2},w'\mmodels p$, then $p\in V_1(w')\cap V_2(w')$, which means that  $M_1,w'\mmodels p$ and  $M_2,w'\mmodels p$. If $md(\lambda_j)>0$, then $\lambda_j=\boxx\lambda'$. Since $M_{M_1\cap M_2},w'\mmodels \boxx\lambda'$, for every $v$ such that $w'\Rx v$ it is the case that $M_{M_1\cap M_2},v\mmodels \lambda'$. Thus, for every $v$ such that $w'\Rx v$, we know by inductive hypothesis that $M_1,v\mmodels\lambda'$ and $M_2,v\mmodels\lambda'$. But this immediately implies that $M_1,w'\mmodels\boxx\lambda'$ and $M_2,v\mmodels\boxx\lambda'$, which completes the induction. Now, we know that  $M_1,w'\mmodels\lambda_1\wedge\ldots\wedge\lambda_n$ and
$M_2,w'\mmodels\lambda_1\wedge\ldots\wedge\lambda_n$; therefore, $M_1,w'\mmodels\lambda$ and $M_2,w'\mmodels\lambda$. A similar inductive argument
shows that $M_{M_1\cap M_2},w'\mmodels\lambda$, implying that $M_{M_1\cap M_2},w\mmodels\varphi_i$; but this contradicts  our hypothesis that $M_{M_1\cap M_2},w\not\mmodels\varphi$.
\end{proof}

\begin{theorem}\label{th:hornbox_vs_boxdia}
The following relationships hold:
\medskip
\begin{compactenum}
\item $\KnHornBox\prec\KnHorn$;
\item $\KnCoreBox\prec\KnCore$.
\end{compactenum}
\end{theorem}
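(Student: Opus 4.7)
The plan is to establish both items simultaneously using a common witness formula $\psi\equiv\diax p$: each of the inclusions $\KnHornBox\preceq\KnHorn$ and $\KnCoreBox\preceq\KnCore$ is immediate by syntactic inclusion, so only the strict part of $\prec$ has to be argued. The formula $\psi$ is a positive literal, hence it coincides with the degenerate clause $\top\rightarrow\diax p$ with $n{+}m=1$, and therefore belongs to both \KnCore and \KnHorn. Since \KnCoreBox is a syntactic fragment of \KnHornBox, Lemma~\ref{lem:intersection} guarantees that \emph{both} \KnHornBox and \KnCoreBox are closed under intersection of models based on a common frame, and this structural property is the leverage we exploit.

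The argument is by contradiction and is identical for the two items. Assume that there exists a translation $\varphi$ of $\psi$ in \KnHornBox (respectively, \KnCoreBox) over some alphabet $\mathcal P'\supseteq\{p\}$, in the sense of the strong notion $\preceq$. Fix a frame $\mathcal F=(\{w,v_1,v_2\},\{R_\alpha\}_{\alpha\in\tau})$ with $wR_\alpha v_1$, $wR_\alpha v_2$, and every other accessibility relation empty. On the alphabet $\{p\}$, define $M_1=(\mathcal F,V_1)$ and $M_2=(\mathcal F,V_2)$ by setting $V_1(v_1)=V_2(v_2)=\{p\}$ and $V_1(v_2)=V_1(w)=V_2(v_1)=V_2(w)=\emptyset$. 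Then $M_1,w\mmodels\psi$ (witnessed by $v_1$) and $M_2,w\mmodels\psi$ (witnessed by $v_2$), so by hypothesis there exist extensions $M'_1,M'_2$ of $M_1,M_2$ to $\mathcal P'$ with $M'_i,w\mmodels\varphi$ for $i=1,2$.

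Now apply Lemma~\ref{lem:intersection} to obtain $M_{M'_1\cap M'_2},w\mmodels\varphi$. Let $M^\ast$ be the restriction of $M_{M'_1\cap M'_2}$ to the original alphabet $\{p\}$. Since each $V'_i$ agrees with $V_i$ on $p$, the valuation of $p$ in $M^\ast$ coincides with $V_1\cap V_2$ on $p$, which is empty at every world; hence $M^\ast,w\not\mmodels\diax p$. But $M_{M'_1\cap M'_2}$ is itself an extension of $M^\ast$ that satisfies $\varphi$ at $w$, so by the translation property $M^\ast,w\mmodels\psi$, a contradiction. Therefore no such $\varphi$ exists, which yields the strict inequalities in both parts.

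The main obstacle is to make the counterexample survive the strong notion of expressivity, which is permissive enough to introduce fresh propositional letters. Lemma~\ref{lem:intersection} sidesteps this because intersection is a purely structural operation: regardless of how $M'_1$ and $M'_2$ extend $M_1$ and $M_2$ over $\mathcal P'\setminus\{p\}$, forming the intersection collapses the valuation of the original letter $p$ at $v_1$ and $v_2$ to the empty set, which simultaneously preserves $\varphi$ and destroys the only witnesses for $\diax p$ at $w$.
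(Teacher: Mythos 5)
Your proposal is correct and follows essentially the same route as the paper: the witness formula $\diax p$, two models on a common frame with a root and two $\alpha$-successors each carrying $p$ in exactly one of the models, and an appeal to Lemma~\ref{lem:intersection} to intersect the extended models and destroy all witnesses of $\diax p$ while preserving $\varphi$. Your explicit restriction step back to the alphabet $\{p\}$ (to invoke the strong translation property) is a slightly more detailed rendering of what the paper leaves implicit, but the argument is the same.
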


\begin{proof}
Since $\KnHornBox$ (resp., \KnCoreBox) is a syntactical fragment of $\KnHorn$ (resp., \KnCore), we know that $\KnHornBox\preceq\KnHorn$ and $\KnCoreBox\preceq\KnCore$. It remains to be proved that there exists a formula that belongs to $\KnHorn$ (resp., \KnCore) and that cannot be translated to $\KnHornBox$ (resp., $\KnCoreBox$) over any finite extension of the propositional alphabet. Here, we prove that this is the case for a \KnCore-formula (which is a \KnHorn-formula as well) that cannot be translated to  \KnHornBox (and, therefore, to \KnCoreBox, either). Let $\mathcal P=\{p\}$, consider the \KnHorn-formula
$$\psi = \diax p,$$

\noindent and suppose by contradiction that there exists a propositional alphabet $\mathcal P'\supseteq \mathcal P$ and a \KnHornBox formula $\varphi$ written over $\mathcal P'$ such that for every model $M$ over the propositional alphabet $\mathcal P$ and every world $w$ we have that $M,w\mmodels \psi$ if and only if there exists $M^{\mathcal P'}$ such that $M^{\mathcal P'},w\mmodels\varphi.$ Let $M_1=(\mathcal F,V_1)$ and $M_2=(\mathcal F,V_2)$, where $\mathcal F$ is based on the set $W=\{w_0,w_1,w_2\}$. Let $w_0\Rx w_1$ and $w_0\Rx w_2$, and define the valuation functions $V_1,V_2$ as follows:

$$
V_i(w_j)=\left\{\begin{array}{ll}
\{p\} & \mbox{if }i=j,\\
\emptyset & \mbox{otherwise}.
\end{array}\right.
$$

\noindent Clearly, $M_1,w_0\mmodels\psi$ and $M_2,w_0\mmodels\psi$; since $\varphi$ is a \KnHornBox-translation of $\psi$, it must be the case that, for some extensions $M_1^{\mathcal P'}$ and $M_2^{\mathcal P'}$, we have that $M_1^{\mathcal P'},w_0\mmodels\varphi$ and $M_2^{\mathcal P'},w_0\mmodels\varphi$. By Lemma~\ref{lem:intersection}, their intersection model $M_{M_1^{\mathcal P'}\cap M_2^{\mathcal P'}}$ is such that $M_{M_1^{\mathcal P'}\cap M_2^{\mathcal P'}},w_0\mmodels\varphi$. But $p \not\in V_{V_1^{\mathcal P'}\cap V_2^{\mathcal P'}}(w)$ for every $w\in W$, so $M_{M_1^{\mathcal P'}\cap M_2^{\mathcal P'}},w\not\mmodels\psi$. This contradicts the hypothesis that $\varphi$ is a translation of $\psi$.
\end{proof}

To establish the expressive power of \KnHornDia and \KnCoreDia with respect to other fragments, we now prove a closure property similar to Lemma~\ref{lem:intersection}. Consider two models $M_1 = (\mathcal F_1, V_1)$, $M_2 = (\mathcal F_2, V_2)$ based on two (possibly different) relational frames $\mathcal F_1=(W_1,\{R_1\}_{\alpha\in\tau})$ and $\mathcal F_2=(W_2,\{R_2\}_{\alpha\in\tau})$. We define the {\em product} model as the unique model $M_{M_1\times M_2}=(\mathcal F_{\mathcal F_1 \times \mathcal F_2},V_{V_1\times V_2})$, where: 
\begin{inparaenum}[\it (i)]
\item $\mathcal F_{\mathcal F_1 \times \mathcal F_2} = (W_1 \times W_2, \{R_{R_1\times R_2}\}_{\alpha \in \tau})$, that is, worlds are all and only the pairs of worlds from $W_1$ and $W_2$; 
\item for every $\alpha \in \tau$, $(w_1,w_2) R_{R_1\times R_2,\alpha} (w_1',w_2')$ if and only if $w_1 R_{1,\alpha} w_1'$ and $w_2 R_{2,\alpha} w_2'$, that is, worlds in  $\mathcal F_{\mathcal F_1 \times \mathcal F_2}$ are connected to each other as the component worlds were connected in $\mathcal F_1$ and $\mathcal F_2$; and 
\item $V_{V_1\times V_2}((w_1,w_2)) = V_1(w_1) \cap V_2(w_2)$. 
\end{inparaenum}

\begin{lemma}\label{lem:product}
\KnHornDia\ is closed under product of models.
\end{lemma}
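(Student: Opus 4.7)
The plan is to adapt the argument of Lemma~\ref{lem:intersection}. Let $\varphi=\varphi_1\wedge\ldots\wedge\varphi_l$ be a \KnHornDia-formula satisfied at $w_1$ in $M_1$ and at $w_2$ in $M_2$, and assume for contradiction that $M_{M_1\times M_2},(w_1,w_2)\not\mmodels\varphi$. Then some conjunct, which we may take in the shape $\varphi_i=\nabla(\lambda_1\wedge\ldots\wedge\lambda_n\rightarrow\lambda)$ (treating bare literals as degenerate clauses as in Theorem~\ref{th:horn_vs_bool}), fails at $(w_1,w_2)$. Hence there is a pair $(w_1',w_2')$ reachable from $(w_1,w_2)$ through the box-prefix $\nabla$ in the product frame where $\lambda_1\wedge\ldots\wedge\lambda_n$ holds but $\lambda$ does not. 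By the definition of the product relations $R_{R_1\times R_2,\alpha}$, any such $\nabla$-chain in $\mathcal F_{\mathcal F_1 \times \mathcal F_2}$ is a pair of synchronized $\nabla$-chains, one in $\mathcal F_1$ from $w_1$ to $w_1'$, and one in $\mathcal F_2$ from $w_2$ to $w_2'$.

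The heart of the argument consists in two facts about \KnHornDia-positive literals, each established by induction on $md(\lambda)$. \emph{Projection}: whenever $M_{M_1\times M_2},(u_1,u_2)\mmodels\lambda$, both $M_1,u_1\mmodels\lambda$ and $M_2,u_2\mmodels\lambda$; the base case $\lambda=p$ follows from $V_{V_1\times V_2}((u_1,u_2))=V_1(u_1)\cap V_2(u_2)$, and the inductive step for $\lambda=\diax\lambda'$ exploits the fact that any product-witness $(v_1,v_2)$ accessible from $(u_1,u_2)$ provides, by the very definition of the product relation, separate $R_{i,\alpha}$-accessible witnesses $v_1$ and $v_2$ to which the inductive hypothesis applies. \emph{Lifting}: whenever $M_1,u_1\mmodels\lambda$ and $M_2,u_2\mmodels\lambda$, also $M_{M_1\times M_2},(u_1,u_2)\mmodels\lambda$; the propositional base case is symmetric, and for $\lambda=\diax\lambda'$ independent diamond-witnesses $v_1$ from $M_1$ and $v_2$ from $M_2$ are paired into $(v_1,v_2)$, which is product-accessible from $(u_1,u_2)$, and the inductive hypothesis on $\lambda'$ closes the step.

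Combining these with the failure of $\varphi_i$: Projection applied at $(w_1',w_2')$ yields $M_1,w_1'\mmodels\lambda_j$ and $M_2,w_2'\mmodels\lambda_j$ for each $j$; since $w_1'$ and $w_2'$ are $\nabla$-reachable from $w_1$ and $w_2$ in their respective frames, and $M_1,w_1\mmodels\varphi_i$ and $M_2,w_2\mmodels\varphi_i$, we conclude $M_1,w_1'\mmodels\lambda$ and $M_2,w_2'\mmodels\lambda$. Lifting then forces $M_{M_1\times M_2},(w_1',w_2')\mmodels\lambda$, contradicting the choice of $(w_1',w_2')$. The delicate step is Projection, which is exactly what would fail for \KnHornBox-style positive literals: a pair $(u_1,u_2)$ may have strictly fewer successors in the product than either component has on its own, so a box could hold vacuously in the product without holding in the components. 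Forbidding boxes inside positive literals is precisely what rules out this obstruction, and it is therefore essential to the result; boxes are only permitted inside the prefix $\nabla$ and inside negative literals, whose role in the argument is limited to reachability of the offending world, which does transfer through the product.
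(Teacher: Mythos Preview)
Your proposal is correct and follows essentially the same approach as the paper's proof: both argue by contradiction, isolate a failing clause $\nabla(\lambda_1\wedge\ldots\wedge\lambda_n\rightarrow\lambda)$ at a witness pair $(w_1',w_2')$, prove by induction on modal depth that diamond-only positive literals project from the product to the components and lift back, and derive the contradiction. Your version is slightly more explicit about the $\nabla$-reachability of $w_1'$ and $w_2'$ in the component frames (which the paper leaves implicit) and adds the useful remark explaining why the box case breaks Projection, but the structure and key ideas are the same.
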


\begin{proof}
Let $\varphi=\varphi_1\wedge\ldots\wedge\varphi_l$ be a \KnHornDia-formula such that $M_1,w_1\mmodels\varphi$ and $M_2,w_2\mmodels\varphi$. We want to prove that $M_{M_1\times M_2},(w_1,w_2)\mmodels\varphi$; suppose by way of contradiction, that $M_{M_1\times M_2},(w_1,w_2)\not\mmodels\varphi$. Then, there must be some $\varphi_i$ such that $M_{M_1\times M_2},(w_1,w_2)\not\mmodels\varphi_i$. As in Theorem~\ref{th:horn_vs_bool}, we can assume that $\varphi_i$ is a clause of the type $\nabla(\lambda_1\wedge\ldots\wedge\lambda_n\rightarrow\lambda)$. This means that $M_{M_1\times M_2},(w_1',w_2')\mmodels\lambda_1\wedge\ldots\wedge\lambda_n$ and $M_{M_1\times M_2},(w_1',w_2')\not\mmodels\lambda$ for some $(w_1',w_2')$. We want to prove that, for each $1\le j\le n$,  $M_1$ and $M_2$ satisfy $\lambda_j$ at, respectively, $w_1'$ and $w_2'$. To see this, we reason by induction on  $md(\lambda_j)$. If $md(\lambda_j)=0$, then $\lambda_j=p$ for some propositional letter $p$: by the definition of product, we have that $M_{M_1\times M_2},(w_1',w_2')\mmodels p$ iff $p\in V_1(w_1')\cap V_2(w_2')$, which means that  $M_1,w_1'\mmodels p$ and  $M_2,w_1'\mmodels p$. If $md(\lambda_j)>0$, then $\lambda_j=\diax\lambda'$. Since $M_{M_1\times M_2},(w_1',w_2')\mmodels \diax\lambda'$, then there exists $(v_1,v_2)$ such that $(w_1',w_2')R_{R_1\times R_2,\alpha} (v_1,v_2)$ and $M_{M_1\times M_2},(v_1,v_2)\mmodels \lambda'$.
We know by inductive hypothesis that $M_1,v_1\mmodels\lambda'$ and $M_2,v_2\mmodels\lambda'$ and that, by definition of product, $w_1'R_{1,\alpha} v_1$ and $w_2'R_{2,\alpha} v_2$. But this immediately implies that $M_1,w_1'\mmodels\diax\lambda'$ and $M_2,w_2'\mmodels\diax\lambda'$, which completes the induction. Now, we know that  $M_1,w_1'\mmodels\lambda_1\wedge\ldots\wedge\lambda_n$ and
$M_2,w_2'\mmodels\lambda_1\wedge\ldots\wedge\lambda_n$; therefore, $M_1,w_1'\mmodels\lambda$ and $M_2,w_2'\mmodels\lambda$. A similar inductive argument
shows that $M_{M_1\times M_2},(w_1',w_2')\mmodels\lambda$, implying that $M_{M_1\times M_2},(w_1,w_2)\mmodels\varphi_i$, in contradiction with  the hypothesis that $M_{M_1\times M_2},(w_1,w_2)\not\mmodels\varphi$.
\end{proof}

\begin{theorem}\label{th:horndia_vs_boxdia}
The following relationships hold:
\medskip
\begin{compactenum}
\item $\KnHornDia\prec\KnHorn$;
\item $\KnCoreDia\prec\KnCore$.
\end{compactenum}
\end{theorem}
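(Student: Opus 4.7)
The plan is to mirror the argument of Theorem~\ref{th:hornbox_vs_boxdia}, with the product closure of Lemma~\ref{lem:product} playing the role of the intersection closure of Lemma~\ref{lem:intersection}. Since \KnCoreDia\ is syntactically contained in \KnHornDia, a single separating formula living in \KnCore\ (hence in \KnHorn) will establish both statements at once.

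I would take $\psi \equiv \boxx p \to p$ over the alphabet $\mathcal{P} = \{p\}$; this is a clause with one negative literal $\neg\boxx p$ and one positive literal $p$, so it is Horn ($m = 1$) and Krom ($n + m = 2$), hence $\psi \in \KnCore$. However, $\psi$ is not a \KnHornDia-formula, because its literal $\boxx p$ is a positive literal containing a box, which positive literals in the diamond fragment may not contain.

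Suppose, by contradiction, that some $\varphi \in \KnHornDia$ over $\mathcal{P}' \supseteq \mathcal{P}$ translates $\psi$. I would exhibit two pointed models whose product witnesses the contradiction. Let $M_1 = (\mathcal{F}_1, V_1)$ have $W_1 = \{w_1, v_1\}$ with the single edge $w_1 R_\alpha v_1$ and empty valuation, so that $\boxx p$ is false at $w_1$ and $\psi$ is vacuously satisfied. Let $M_2 = (\mathcal{F}_2, V_2)$ have $W_2 = \{w_2\}$ with empty accessibility and $V_2(w_2) = \{p\}$, so that $\boxx p$ is vacuously true and $p$ holds. In the product $M_1 \times M_2$ the world $(w_1, w_2)$ has no $\alpha$-successors (since $w_2$ has none), and its valuation $V_1(w_1) \cap V_2(w_2) = \emptyset$ leaves $p$ false there; hence $\boxx p$ is vacuously true while $p$ is false, so $\psi$ fails at $(w_1, w_2)$.

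By the assumed translation there are extensions $M_1', M_2'$ with $M_i', w_i \mmodels \varphi$, and Lemma~\ref{lem:product} yields $M_1' \times M_2', (w_1, w_2) \mmodels \varphi$. The remaining observation is that $M_1' \times M_2'$ is itself an extension of $M_1 \times M_2$: at every product world $(x_1, x_2)$, restricting the valuation of $M_1' \times M_2'$ to $\mathcal{P}$ gives $(V_1'(x_1) \cap V_2'(x_2)) \cap \mathcal{P} = V_1(x_1) \cap V_2(x_2)$, since the $V_i'$ agree with $V_i$ on $\mathcal{P}$. The translation property then forces $M_1 \times M_2, (w_1, w_2) \mmodels \psi$, contradicting the explicit failure computed above. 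The only slightly delicate step is the commutation of product with restriction to $\mathcal{P}$; everything else follows the template of Theorem~\ref{th:hornbox_vs_boxdia}.
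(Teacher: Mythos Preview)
Your proof is correct and follows essentially the same approach as the paper: both exploit Lemma~\ref{lem:product} (closure of \KnHornDia under products) on two models whose product falsifies the separating formula, the only cosmetic difference being that you use $\boxx p \to p$ over one letter while the paper uses $\boxx p \to q$ over two. Your explicit check that the product of the extensions is itself an extension of the product (via $(V_1'(x_1)\cap V_2'(x_2))\cap\mathcal P = V_1(x_1)\cap V_2(x_2)$) is a point the paper leaves implicit, so your argument is in fact slightly more careful.
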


\begin{proof}
Since $\KnHornDia$ (resp., \KnCoreDia) is a syntactical fragment of $\KnHorn$ (resp., \KnCore), we know that $\KnHornDia\preceq\KnHorn$ and $\KnCoreDia\preceq\KnCore$. It remains to be proved that there exists a formula that belongs to $\KnHorn$ (resp., \KnCore) and that cannot be translated to $\KnHornDia$ (resp., $\KnCoreDia$) over any finite extension of the propositional alphabet. Here, we prove that this is the case for a \KnCore-formula (which is a \KnHorn-formula as well) that cannot be translated to  \KnHornDia (and, therefore, to \KnCoreDia, either). Let $\mathcal P=\{p, q\}$, consider the \KnHorn-formula
$$\psi = \boxx p \rightarrow q,$$

\noindent and suppose by contradiction that there exists a propositional alphabet $\mathcal P'\supseteq \mathcal P$ and a \KnHornDia formula $\varphi$ written over $\mathcal P'$ such that for every model $M$ over the propositional alphabet $\mathcal P$ and every world $w$ we have that $M,w\mmodels \psi$ if and only if there exists $M^{\mathcal P'}$ such that $M^{\mathcal P'},w\mmodels\varphi.$ Let $M_1=(\mathcal F_1,V_1)$ and $M_2=(\mathcal F_2,V_2)$, where $\mathcal F_1$ is based on the set $W=\{w_0,w_1\}$ and such that $w_0\Rx w_1$, while $\mathcal F_2$ is based on $\{v_0\}$ and such that $\Rx = \emptyset$. Define the valuation function $V_1$ as always empty, and let $q\in V_2(v_0)$. Clearly, $M_1,w_0\mmodels\psi$ and $M_2,v_0\mmodels\psi$. Since $\varphi$ is a \KnHornDia-translation of $\psi$, it must be the case that, for some extensions $M_1^{\mathcal P'}$ and $M_2^{\mathcal P'}$, we have that $M_1^{\mathcal P'},w_0\mmodels\varphi$ and $M_2^{\mathcal P'},v_0\mmodels\varphi$. By Lemma~\ref{lem:product}, their product model $M_{M_1^{\mathcal P'}\times M_2^{\mathcal P'}}$ is such that $M_{M_1^{\mathcal P'}\times M_2^{\mathcal P'}},(w_0,v_0)\mmodels\varphi$. Notice that $q \not\in V_{V_1^{\mathcal P'}\times V_2^{\mathcal P'}}(w_0,v_0)$ and that $(w_0,v_0)$ has no $\Rx$-successors. Hence, we have that $M_{M_1^{\mathcal P'}\times M_2^{\mathcal P'}},(w_0,v_0)\mmodels\boxx p$ but $M_{M_1^{\mathcal P'}\times M_2^{\mathcal P'}},(w_0,v_0)\not\mmodels q$, in contradiction with the hypothesis that $\varphi$ is a translation of $\psi$. Therefore, $\varphi$ cannot exist, and this means that $\psi$ cannot be expressed in \KnHornDia within any finite extension of the propositional alphabet.
\end{proof}

The argument of Theorem~\ref{th:hornbox_vs_boxdia}, based on the intersection of models, cannot be replicated to establish the relationship between $\KnKromBox$ and $\KnKrom$. It turns out that in this case the possibility of expanding the propositional alphabet does make the difference, as the following result shows.

\begin{theorem}\label{th:krombox_vs_krom}
The following relationships hold:
\medskip
\begin{compactenum}
\item $\KnKromBox\equiv\KnKrom$;
\item $\KnKromBox\prec^w\KnKrom$.
\end{compactenum}
\end{theorem}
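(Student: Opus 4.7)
For part~1, the inclusion $\KnKromBox \preceq \KnKrom$ follows immediately from syntactic containment. For the converse $\KnKrom \preceq \KnKromBox$, the plan is to introduce, for every positive literal $\lambda$ appearing as a subformula of $\psi$, two fresh propositional letters $p_\lambda$ and $\bar p_\lambda$ with intended meanings $p_\lambda \leftrightarrow \lambda$ and $\bar p_\lambda \leftrightarrow \neg \lambda$. The rewriting of the Krom clauses of $\psi$ replaces every non-atomic literal by its proxy, producing binary clauses in which each literal is either propositional or a pure nesting of boxes over a propositional letter. The key identity is $\diax \mu \equiv \neg \boxx \bar p_\mu$, which eliminates diamonds from positive literals by pushing the negation outside through a propositional proxy for $\neg \mu$.

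The translated formula then conjoins defining axioms enforcing the intended meanings: $p_p \leftrightarrow p$, $p_{\boxx \mu} \leftrightarrow \boxx p_\mu$, $p_{\diax \mu} \leftrightarrow \neg \boxx \bar p_\mu$, and $\bar p_\mu \leftrightarrow \neg p_\mu$. Each of these equivalences unfolds into two binary clauses all lying in $\KnKromBox$; the delicate case is $p_{\diax \mu} \leftrightarrow \neg \boxx \bar p_\mu$, whose two directions read $\neg p_{\diax \mu} \vee \neg \boxx \bar p_\mu$ (two negative literals) and $p_{\diax \mu} \vee \boxx \bar p_\mu$ (two positive literals, the second a pure box over a proposition). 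Because $\Kn$ has no universal modality, I would propagate every axiom by prefixing it with each $\nabla$-sequence of length at most the modal depth $d$ of $\psi$---a finite (if exponential) blow-up, which suffices for an effective translation. Correctness then follows: for the forward direction, set $V'(p_\lambda) = \{v : M, v \mmodels \lambda\}$ and $V'(\bar p_\lambda) = W \setminus V'(p_\lambda)$; for the backward direction, induct on the structure of $\lambda$ to show that the axioms force each proxy to track its intended value at every world reachable from $w$ within depth $d$, whence the rewritten clauses enforce the original $\psi$.

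For part~2, the plan is to exhibit $\psi = \diax p$ over $\mathcal P = \{p\}$ as a separating formula. Suppose toward contradiction that some $\varphi \in \KnKromBox$ written over $\{p\}$ is weakly equivalent to $\diax p$, and consider the model $M_2$ consisting of a single world $w_0$ with empty valuation and no outgoing edges, and the model $M_1$ with worlds $\{w_0, w_1\}$, the sole edge $w_0 R_\alpha w_1$, and $V(w_1) = \{p\}$; then $M_1, w_0 \mmodels \diax p$ while $M_2, w_0 \not\mmodels \diax p$. Since over $\{p\}$ every positive literal of $\KnKromBox$ has the form $\top$, $p$, or a non-trivial nesting of boxes applied to $\top$ or $p$, a direct case analysis shows that the literals which are false at $w_0$ in $M_2$ are exactly $p$, $\neg \top$, and the negations of non-trivial box nestings over $\top$ or $p$; using that in $M_1$ the world $w_1$ satisfies $p$ and has no successors, every such literal is also false at $w_0$ in $M_1$. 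A KromBox clause $\nabla(L_1 \vee L_2)$ failing at $w_0$ in $M_2$ must therefore have empty $\nabla$ (since $M_2$ has no non-trivial paths) and both $L_i$ false at $w_0$; by the literal-level analysis the same two literals are then false at $w_0$ in $M_1$, so the clause fails at $w_0$ in $M_1$ as well. Hence any $\varphi$ holding at $w_0$ in $M_1$ also holds at $w_0$ in $M_2$, contradicting equivalence with $\diax p$. The main obstacle is the careful bookkeeping in part~1---guaranteeing that the axioms never reintroduce a diamond into a positive literal and that they propagate to every world queried by $\psi$---while part~2 reduces to the small literal-by-literal analysis above.
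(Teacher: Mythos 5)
Your proposal is correct, but it takes a genuinely different route from the paper in both parts. For part~1, the paper performs a local, clause-by-clause rewriting: an occurrence $\diax\lambda$ (resp.\ $\neg\diax\lambda$) in a clause $\nabla_i(\lambda_1^i\vee\lambda_2^i)$ is replaced by $\neg\boxx p$ (resp.\ $\boxx p$) together with the single side clause $\nabla_i\boxx(p\vee\lambda)$ (resp.\ $\nabla_i\boxx(\neg p\vee\neg\lambda)$) under the same prefix, one fresh letter per occurrence, which keeps the translation linear and needs no global propagation of defining axioms. Your Tseitin-style renaming of every literal subformula rests on the same key identity $\diax\mu\equiv\neg\boxx\neg\mu$ made Krom-compatible by a fresh letter for $\neg\mu$, but pays an exponential price by prefixing the definitional clauses with all box sequences up to the modal depth $d$; since the similarity type is finite this is still an effective translation, so it proves the theorem, and it handles nested diamonds uniformly, where the paper's leaner rewriting must be iterated. (One small point of care in your backward direction: a proxy $p_\lambda$ is only forced to agree with $\lambda$ at worlds reachable within depth $d-md(\lambda)$, not at all worlds within depth $d$; the bookkeeping still closes because prefix length plus literal depth is bounded by $d$.) For part~2, the paper argues from an \emph{arbitrary} model falsifying $\diax p$: it adds one fresh $R_\alpha$-successor of $w$ satisfying $p$ and shows by induction on modal depth that box-only literals keep their truth values at all old worlds, so the purported translation still fails while $\diax p$ becomes true. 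You instead fix two concrete pointed models and show, by a finite case analysis on the literals over $\{p\}$, that every $\KnKromBox$ formula true at the root of the two-world model is also true at the root of the successor-free one-world model, which contradicts weak equivalence with $\diax p$; this preservation argument specialized to two small models is equally valid (just make explicit that $V(w_0)=\emptyset$ in $M_1$). In sum, the paper's versions buy a linear translation and an argument uniform over all countermodels, while yours buys a systematic treatment of nesting and a very concrete separation.
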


\begin{proof}
The first result is easy to prove. Suppose that
$$\varphi=\nabla_1(\lambda_1^1\vee\lambda_2^1)\wedge \nabla_2(\lambda_1^2\vee\lambda_2^2)\wedge\ldots\wedge\nabla_i(\lambda_1^i\vee\lambda_2^i)\wedge\ldots
\wedge\nabla_l(\lambda_1^l\vee\lambda_2^l)$$

\noindent is a $\KnKrom$-formula, where, as always, we treat literals as special clauses. There are two cases. First, suppose that $\lambda_1^i=\diax\lambda$, for some $1\leq i\leq l$, where $\lambda$ is a positive literal. We claim that the \KnKromBox-formula
$$\varphi'=\nabla_1(\lambda_1^1\vee\lambda_2^1)\wedge \nabla_2(\lambda_1^2\vee\lambda_2^2)\wedge\ldots\wedge\nabla_i(\neg\boxx p\vee\lambda_2^i)\wedge\nabla_i\boxx(p\vee\lambda)\wedge\ldots
\wedge\nabla_l(\lambda_1^l\vee\lambda_2^l),$$

\noindent where $p$ is a fresh propositional variable, is equi-satisfiable to $\varphi$. To see this, let $\mathcal P$ the propositional alphabet in which $\varphi$ is written, and let $\mathcal P'=\mathcal P\cup\{p\}$, and consider a model $M=(\mathcal F,V)$ such that, for some world $w$, it is the case that $M,w\mmodels\varphi$; in particular, it is the case that $M,w\mmodels\nabla_i(\lambda_1^i\vee\lambda_2^i)$; let $W_i\subseteq W$ be the set of worlds reachable from $w$ via the universal prefix $\nabla_i$, and consider $v\in W_i$. If $M,v\mmodels\lambda_2^i$ we can extend $M$ to a model $M^\mathcal P=(\mathcal F,V^\mathcal P)$ such that it satisfies $p$ on every world $\alpha$-reachable from $v$, if any, and both substituting clauses are satisfied. If, on the other hand, $M,v\mmodels\diax\lambda$, for some $t$ such that $v\Rx t$ we have that $M,t\mmodels\lambda$; we can now extend $M$ to a model $M^\mathcal P=(\mathcal F,V^\mathcal P)$ such that it satisfies $\neg p$ on $t$, and $p$ on every other world reachable from $v$, if any, and, again, both substituting clauses are satisfied. A reversed argument proves that if $M,w\mmodels\varphi'$ it must be the case that $M,w\mmodels\varphi$. If, as a second case, $\lambda_1^i=\neg\diax\lambda$, where $\lambda$ is a positive literal, then the translating formula is
$$\varphi'=\nabla_1(\lambda_1^1\vee\lambda_2^1)\wedge \nabla_2(\lambda_1^2\vee\lambda_2^2)\wedge\ldots\wedge\nabla_i(\boxx p\vee\lambda_2^i)\wedge\nabla_i\boxx(\neg p\vee\neg \lambda)\wedge\ldots
\wedge\nabla_l(\lambda_1^l\vee\lambda_2^l),$$

\noindent and the proof of equi-satisfiability is identical to the above one.

\medskip

In order to prove the second result, we observe that since $\KnKromBox$ is a syntactical fragment of $\KnKrom$ we know that $\KnKromBox\preceq^w\KnKrom$. It remains to be proved that there exists a formula that belongs to $\KnKrom$ and that cannot be translated to $\KnKromBox$ within the same propositional alphabet. Let $\mathcal P=\{p\}$, consider the \KnHorn-formula
$$\psi = \diax p,$$

\noindent and suppose by contradiction that there exists a \KnKromBox formula $\varphi$ such that for every model $M$ over the propositional alphabet $\mathcal P$ and every world $w$ we have that $M,w\mmodels \psi$ if and only if $M,w\mmodels\varphi.$ Once again, we can safely assume that $\varphi=\varphi_1\wedge\varphi_2\wedge\ldots\wedge\varphi_l$, and that each $\varphi_i$ is a clause. Consider a model $M = \langle \mathcal F, V\rangle$, where $\mathcal F$ is based on the set of worlds $W$, and let $w\in W$ be a world such that $M,w\not\mmodels\psi$. Such a model must exist since $\psi$ is not a tautology. Since $\varphi$ is a conjunction of Krom clauses, we have that there must exist at least one clause $\varphi_i = \nabla(\lambda_1 \vee \lambda_2)$ such that $M, w \not\mmodels \varphi_i$. Hence, there must exist a world $w'$ such that $M, w' \not\mmodels (\lambda_1 \vee \lambda_2)$. Now, consider the model $M^*$ obtained from $M$ by extending the set of worlds $W$ to $W^*=W\cup\{w^*\}$, in such a way that $w\Rx^* w^*$ and that $p\in V^*(w^*)$; clearly, $M^*,w\mmodels\psi$. We want to prove that $M^*,w'\not\mmodels\lambda_1\vee\lambda_2$. Let us prove the following:
$$M,t\mmodels\lambda\Leftrightarrow M^*,t\mmodels\lambda,$$

\noindent for every $t\in W$ and positive literal $\lambda$. We do so by induction on $md(\lambda)$. If $md(\lambda)=0$, then $\lambda$ is a propositional letter (the cases in which $\lambda=\top$ is trivial): the valuation of $t$ has not changed from $M$ to $M^*$, and therefore we have the claim immediately. If $md(\lambda)>0$, then we have two cases:

\medskip

\begin{itemize}
\item $\lambda=\Box_\beta\lambda'$, and $\beta\neq\alpha$. In this case the claim holds trivially, as the $\beta$-structure has not changed from $M$ to $M^*$.

\item $\lambda=\boxx\lambda'$, and $\lambda'$ is a positive literal. By definition, $M,t\mmodels\boxx\lambda'$ if and only if for every $t'$ such that $t\Rx t'$, if any, it is the case that $M,t'\mmodels\lambda'$. Clearly, if $t\neq w$, the set of reachable worlds from $t$ has not changed, and thanks to the inductive hypothesis, $M,t'\mmodels \lambda'$ if and only if $M^*,t'\mmodels \lambda'$; therefore, $M,t'\mmodels\lambda$ if and only if $M^*,t'\mmodels \lambda$ as we wanted. Otherwise, suppose that $t=w$. If $M,t\not\mmodels\boxx\lambda'$, then: \begin{inparaenum}[\it (i)] \item $\lambda'\neq\top$, because $\boxx\top$ is always satisfied, and \item there exist some $t'$ such that $t\Rx t'$ and $M,t'\not\mmodels\lambda'$, and $t'\neq w^*$ (since $w^*$ is a new world); so, by inductive hypothesis, $M^*,t'\not\mmodels\lambda'$, which means that $M^*,t\not\mmodels\boxx\lambda'$. \end{inparaenum} If, on the other hand, $M,t\mmodels\boxx\lambda'$, then: \begin{inparaenum}[\it (i)] \item if $\lambda'=\top$, then $M^*,t\mmodels\boxx\top$ independently from the presence of $w^*$; \item if $\lambda'=\Box_\beta\lambda''$ for some relation $\beta$, then observe that $M^*,w^*\mmodels\Box_\beta\lambda''$ because $w^*$  has no $\beta$-successors for any relation $\beta$, and, hence, $M^*,t\mmodels\boxx\lambda'$, and \item if $\lambda'=p$, then $M^*,t\mmodels\boxx\lambda'$ because $w\Rx^* w^*$ and $p\in V^*(w^*)$ by construction. \end{inparaenum} %{\color{red} It is worth to notice that this arguments works only under the hypothesis that our propositional alphabet is fixed. TO BE MOVED TO ANOTHER PLACE?}
\end{itemize}

\medskip

\noindent Since by hypothesis $M,w'\not\mmodels\lambda_1\vee\lambda_2$, the above argument implies that $M^*,w'\not\mmodels\lambda_1\vee\lambda_2$, which means that $M^*,w\not\mmodels\varphi_i$, that is, $M^*,w\not\mmodels\varphi$. Therefore $\varphi$ cannot be a translation of $\psi$, and the claim is proved.
\end{proof}

The following result deals with sub-Krom fragments without boxes; as before, the argument of Theorem~\ref{th:horndia_vs_boxdia}, based on the product of models, cannot be replicated.

\begin{theorem}\label{th:kromdia_vs_boxdia}
The following relationships hold:
\medskip
\begin{compactenum}
\item $\KnKromDia\equiv\KnKrom$;
\item $\KnKromDia\prec^w\KnKrom$;
\end{compactenum}
\end{theorem}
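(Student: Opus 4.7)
For the first claim $\KnKromDia\equiv\KnKrom$, my plan is to dualise the rewriting used in the proof of Theorem~\ref{th:krombox_vs_krom}(1), swapping the roles of $\diax$ and $\boxx$. The inclusion $\KnKromDia\preceq\KnKrom$ is syntactic. For the converse, I would eliminate each box occurring in a positive literal of a \KnKrom\ clause by introducing one fresh propositional letter. Concretely, a clause of the form $\nabla(\boxx\mu\vee\lambda)$ becomes the \KnKromDia\ conjunction $\nabla(\neg\diax p\vee\lambda)\wedge\nabla\boxx(p\vee\mu)$ with $p$ fresh, while $\nabla(\neg\boxx\mu\vee\lambda)$ becomes $\nabla(\diax p\vee\lambda)\wedge\nabla\boxx(\neg p\vee\neg\mu)$. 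The equi-satisfiability argument (allowing an extension of the valuation to interpret $p$) is the exact dual of the one in Theorem~\ref{th:krombox_vs_krom}(1); boxes occurring more deeply inside $\mu$ or $\lambda$ are dealt with by iterating the rewriting.

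For the second claim, I would take as witness $\psi=\neg\boxx p\vee q$ over $\mathcal P=\{p,q\}$, the same formula used in Theorem~\ref{th:horndia_vs_boxdia}. The inclusion $\KnKromDia\preceq^w\KnKrom$ is again immediate, so only the strict weak inequality needs to be established. Assume for contradiction that some $\varphi=\varphi_1\wedge\ldots\wedge\varphi_l\in\KnKromDia$ over $\mathcal P$ is equivalent to $\psi$. The plan is to produce two pointed models that disagree on $\psi$ but cannot be distinguished by any \KnKromDia\ formula. Let $M_1$ have worlds $\{w,t\}$ with $w\Rx t$, $V_1(w)=\emptyset$, $V_1(t)=\{p\}$, so that $M_1,w\mmodels\boxx p$ and $M_1,w\not\mmodels\psi$. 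Let $M_2$ have worlds $\{w,t_1,t_2\}$ with $w\Rx t_1$ and $w\Rx t_2$, $V_2(w)=\emptyset$, $V_2(t_1)=\{p\}$, $V_2(t_2)=\emptyset$, so that $M_2,w\not\mmodels\boxx p$ and $M_2,w\mmodels\psi$. Then some clause $\varphi_i=\nabla(\pm\lambda_1\vee\pm\lambda_2)$ of $\varphi$ must be false at $(M_1,w)$ and true at $(M_2,w)$.

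The crux, and the main obstacle, is to rule out any such distinguishing $\varphi_i$ by a case analysis on the prefix $\nabla$. If $\nabla$ involves any modality $\beta\neq\alpha$, or contains two or more occurrences of $\boxx$, the set of $\nabla$-reachable worlds from $w$ is empty in both models, since $t$, $t_1$, $t_2$ are $\Rx$-sinks and have no $\beta$-successors, so $\varphi_i$ holds trivially in both. For $\nabla$ empty, a short induction on $md(\lambda)$ shows that every \KnKromDia\ positive literal takes the same value at $w$ in $M_1$ and $M_2$: the key observation is that the only \KnKromDia\ positive literal true at $t_2$ is $\top$, which is also true at $t_1$, so the extra successor $t_2$ of $M_2$ never provides a new witness for any $\diax$-literal at $w$; hence $\varphi_i$ takes the same value at the two points. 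The remaining case $\nabla=\boxx$ is where the structural difference between the two models would have to be exploited: $\varphi_i$ is false at $(M_1,w)$ iff its disjunction fails at $t$, and is true at $(M_2,w)$ only if the disjunction holds at both $t_1$ and $t_2$. But $t$ in $M_1$ and $t_1$ in $M_2$ are $\Rx$-sinks with identical valuation $\{p\}$, so they satisfy exactly the same \KnKromDia\ positive literals, and a disjunction failing at $t$ necessarily fails at $t_1$, contradicting the truth of $\varphi_i$ throughout $M_2$. This exhausts the cases and completes the contradiction.
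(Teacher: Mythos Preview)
Your argument is correct and matches the paper's. Part~(1) is verbatim the paper's rewriting, and for part~(2) you use the same witness $\boxx p\rightarrow q$ and the same construction---your $M_2$ is exactly the paper's $M^*$, obtained from $M=M_1$ by adjoining a fresh $\alpha$-sink with empty valuation---the only cosmetic difference being that the paper works with a generic countermodel and proves one lemma (every diamond-only positive literal has the same truth value at each old world in $M$ and $M^*$) by induction on modal depth, whereas you instantiate the minimal countermodel and case-split on the length of the prefix~$\nabla$.
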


\begin{proof}
The first result is relatively easy to see. Suppose that
$$\varphi=\nabla_1(\lambda_1^1\vee\lambda_2^1)\wedge \nabla_2(\lambda_1^2\vee\lambda_2^2)\wedge\ldots\wedge\nabla_i(\lambda_1^i\vee\lambda_2^i)\wedge\ldots
\wedge\nabla_l(\lambda_1^l\vee\lambda_2^l)$$

\noindent is a $\KnKrom$-formula, where, as always, we treat literals as special clauses. There are two cases. Suppose, first, that $\lambda_1^i=\boxx\lambda$, where $\lambda$ is a positive literal. We claim that the \KnKromDia-formula
$$\varphi'=\nabla_1(\lambda_1^1\vee\lambda_2^1)\wedge \nabla_2(\lambda_1^2\vee\lambda_2^2)\wedge\ldots\wedge\nabla_i(\neg\diax p\vee\lambda_2^i)\wedge\nabla_i\boxx(p\vee\lambda)\wedge\ldots
\wedge\nabla_l(\lambda_1^l\vee\lambda_2^l),$$

\noindent where $p$ is a fresh propositional variable, is equi-satisfiable to $\varphi$. To see this, let $\mathcal P$ the propositional alphabet in which $\varphi$ is written, and let $\mathcal P'=\mathcal P\cup\{p\}$, and consider a model $M=(\mathcal F,V)$ such that, for some world $w$, it is the case that $M,w\mmodels\varphi$; in particular, it is the case that $M,w\mmodels\nabla_i(\lambda^i_1\vee \lambda^i_2)$; let $W_i\subseteq W$ be the set of worlds reachable from $w$ via the universal prefix $\nabla_i$, and consider $v\in W_i$. If $M,v\mmodels\lambda_2^i$ we can extend $M$ to a model $M^\mathcal P=(\mathcal F,V^\mathcal P)$ such that it satisfies $p$ on every world $\alpha$-reachable from $v$, if any, and both substituting clauses are satisfied. If, on the other hand, $M,v\mmodels\boxx\lambda_1^i$, for every $t$ such that $v\Rx t$ we have that $M,t\mmodels\lambda$; we can now extend $M$ to a model $M^\mathcal P=(\mathcal F,V^\mathcal P)$ such that it satisfies $\neg p$ on every such $t$ (if any), and, again, both substituting clauses are satisfied. A reversed argument proves that if $M,w\mmodels\varphi'$ it must be the case that $M,w\mmodels\varphi$. If, as a second case, $\lambda_1^i=\neg\boxx\lambda$, where $\lambda$ is a positive literal, then the translating formula is
$$\varphi'=\nabla_1(\lambda_1^1\vee\lambda_2^1)\wedge \nabla_2(\lambda_1^2\vee\lambda_2^2)\wedge\ldots\wedge\nabla_i(\diax p\vee\lambda_2^i)\wedge\nabla_i\boxx(\neg p\vee\neg \lambda)\wedge\ldots
\wedge\nabla_l(\lambda_1^l\vee\lambda_2^l),$$

\noindent and the proof of equi-satisfiability is identical to the above one.

\medskip

As for the second relationship, since $\KnKromDia$ is a syntactical fragment of $\KnKrom$, we know that $\KnKromDia\preceq\KnKrom$. It remains to show that the relationship is strict. To this end, we consider the following $\KnKrom$-formula and we prove that it cannot be translated to $\KnKromDia$ within the same propositional alphabet:
$$\psi=\boxx p\rightarrow q.$$

Suppose, by contradiction, that there exist a conjunction $\varphi$ of box-free Krom clauses, such that for every model $M$ over the propositional alphabet $\mathcal P=\{p,q\}$ and every world $w$ we have that $M,w\mmodels \psi$ if and only if $M,w\mmodels\varphi.$ Let $\varphi=\varphi_1 \land \ldots \land \varphi_n$, where each $\varphi_i$ is in its generic form $\nabla(\lambda_1 \vee \lambda_2)$, with $\lambda_1$ and $\lambda_2$ either positive or negative literals. As always, literals are treated as special clauses. Now, consider a model $M = \langle \mathcal F, V\rangle$, where $\mathcal F$ is based on the set of worlds $W$, and let $w\in W$ be a world such that $M,w\not\mmodels\psi$, and that exists at least one $v$ such that $w\Rx v$. Since $M,w\not\mmodels\psi$,  we have that $q\notin V(w)$ and for each $v$ such that $w\Rx v$ it is the case that $p\in V(v)$. Since $\varphi$ is a translation of $\psi$, it must be the case that $M,w\not\mmodels\varphi$, which implies that there must be a clause $\varphi_i$ such that $M,w\not\mmodels\varphi_i$, that is, there must be a world $w'$ such that  $M,w'\not\mmodels(\lambda_1\vee\lambda_2)$. Now, consider the model $M^*$ obtained from $M$ by extending the set of worlds $W$ to $W^*=W\cup\{w^*\}$, in such a way that $w\Rx^\ast w^*$  and that $V^*(w^*)=\emptyset$; clearly, $M^*,w\mmodels \psi$. We want to prove that $M^*,w'\not\mmodels\varphi_i$. Let us prove the following:
$$M,t\mmodels\lambda\Leftrightarrow M^*,t\mmodels\lambda,$$

\noindent for every $t\in W$ and positive literal $\lambda$. We do so by induction on $md(\lambda)$. If $md(\lambda)=0$, then $\lambda$ is a propositional letter (the cases in which $\lambda=\top$ are trivial): the valuation of $t$ has not changed from $M$ to $M^*$, and therefore we have the claim immediately. If $md(\lambda)>0$, then there are two cases:

\begin{itemize}
\item $\lambda=\Diamond_\beta\lambda'$, and $\beta\neq\alpha$. In this case the claim holds trivially, as the $\beta$-structure has not changed from $M$ to $M^*$.

\item $\lambda=\diax\lambda'$, and $\lambda'$ is a positive literal. By definition, $M,t\mmodels\diax\lambda'$ if and only if there exist some $t'$ such that $t\Rx t'$ and $M,t'\mmodels\lambda'$. Clearly, if $t\neq w$, the set of reachable worlds from $t$ has not changed, and thanks to the inductive hypothesis, $M,t'\mmodels \lambda'$ if and only if $M^*,t'\mmodels \lambda'$; therefore, $M,t\mmodels\diax\lambda$ if and only if $M^*,t\mmodels\diax\lambda$, as we wanted. Otherwise, suppose that $t=w$. If $M,t\mmodels\diax\lambda'$, then there exist some $t'$ such that $t\Rx t'$ and $M,t'\mmodels\lambda'$, and $t'\neq w^*$ (since $w^*$ is a new world); so, by inductive hypothesis, $M,t'\mmodels\lambda'$, which means that $M^*,t\mmodels\diax\lambda'$. If, on the other hand, $M,t\not\mmodels\diax\lambda'$, then: \begin{inparaenum}[\it (i)] \item $\lambda'\neq\top$, because we have built $M$ in such a way that $w$ has a $\alpha$-successor, and \item for every $t'$ such that $t\Rx t'$ it is the case that $M,t'\not\mmodels\lambda'$. \end{inparaenum} Since $V(w^*)=\emptyset$, and $\lambda'$ is positive, for every $t'$ such that $t\Rx^\ast t'$ it is the case that $M^*,t'\not\mmodels\lambda'$, and, therefore, $M^*,t\not\mmodels\diax\lambda'$, as we wanted. %{\color{red}It is worth to notice that this arguments works only under the hypothesis that our propositional alphabet is fixed.TO BE MOVED TO ANOTHER PLACE? MOREOVER, IT IS A REPETITION.}

\end{itemize}

\noindent This means that $M,w'\not\models\lambda_1\vee\lambda_{2}$ implies that  $M^*,w'\not\models\lambda_1\vee\lambda_{2}$, that is, $M^*,w'\not\models\varphi_i$. This implies that $M^*,w\not\mmodels\varphi$. Therefore, $\varphi$ cannot exist, and this means that $\psi$ cannot be expressed in \KnKromDia within the same propositional alphabet.
\end{proof}

\begin{corollary}
The following results hold:
\medskip
\begin{compactenum}
\item \KnHornBox and \KnHornDia are $\prec$-incomparable;
\item \KnKromBox and \KnKromDia are $\prec^w$-incomparable;
\item \KnCoreBox and \KnCoreDia are $\prec$-incomparable.
\end{compactenum}
\end{corollary}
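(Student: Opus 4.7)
The plan is to derive all three incomparability statements as immediate consequences of the witness formulas already constructed in the proofs of Theorems~\ref{th:hornbox_vs_boxdia}, \ref{th:horndia_vs_boxdia}, \ref{th:krombox_vs_krom}, and \ref{th:kromdia_vs_boxdia}. In each case I only need to verify that the witness formula used to separate the ``box-only'' side from the full language is itself a formula of the ``diamond-only'' side, and vice versa; no new model constructions are required, making this corollary essentially a bookkeeping exercise.

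For item~(1), I will take the formula $\diax p$, which is a single positive literal and therefore belongs to \KnCoreDia, and \emph{a fortiori} to \KnHornDia. Theorem~\ref{th:hornbox_vs_boxdia} asserts that $\diax p$ admits no \KnHornBox-translation over any finite extension of $\{p\}$, so $\KnHornDia\not\preceq\KnHornBox$. In the opposite direction, I will take $\boxx p\rightarrow q$, whose only positive literals are $\boxx p$ and $q$, and which therefore belongs to \KnCoreBox, and in particular to \KnHornBox. Theorem~\ref{th:horndia_vs_boxdia} then gives $\KnHornBox\not\preceq\KnHornDia$. Item~(3) is obtained from the very same pair of witnesses: $\diax p\in\KnCoreDia$ is not translatable to \KnCoreBox by the second claim of Theorem~\ref{th:hornbox_vs_boxdia}, and $\boxx p\rightarrow q\in\KnCoreBox$ is not translatable to \KnCoreDia by the second claim of Theorem~\ref{th:horndia_vs_boxdia}.

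For item~(2) the argument will be analogous but must be carried out at the weak level, since by Theorems~\ref{th:krombox_vs_krom} and~\ref{th:kromdia_vs_boxdia} both \KnKromBox and \KnKromDia are equally expressive to \KnKrom under the strong notion, and hence equally expressive to one another. I will reuse the same witnesses $\diax p$ and $\boxx p\rightarrow q$, which belong respectively to \KnKromDia and \KnKromBox: the second claim of Theorem~\ref{th:krombox_vs_krom} shows that $\diax p$ is not expressible in \KnKromBox within the alphabet $\{p\}$, while the second claim of Theorem~\ref{th:kromdia_vs_boxdia} shows that $\boxx p\rightarrow q$ is not expressible in \KnKromDia within the alphabet $\{p,q\}$. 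This yields $\KnKromDia\not\preceq^w\KnKromBox$ and $\KnKromBox\not\preceq^w\KnKromDia$.

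The only point requiring attention will be choosing the correct expressivity notion for each item: strong for the Horn and Core cases, where the separations of Theorems~\ref{th:hornbox_vs_boxdia} and~\ref{th:horndia_vs_boxdia} hold even under arbitrary alphabet extensions, and weak for the Krom case, where alphabet extensions do enable the translation. Once this distinction is observed, the proof reduces to a routine membership check of the two witness formulas in the appropriate box or diamond sub-fragment.
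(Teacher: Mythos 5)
Your proposal is correct and follows essentially the same route as the paper: it reuses the witnesses $\diax p$ and $\boxx p\rightarrow q$ from the proofs of Theorems~\ref{th:hornbox_vs_boxdia}, \ref{th:horndia_vs_boxdia}, \ref{th:krombox_vs_krom}, and~\ref{th:kromdia_vs_boxdia}, checks their membership in the appropriate diamond/box fragments, and correctly applies the strong notion for the Horn and core cases and the weak notion for the Krom case (where the strong equivalence $\KnKromBox\equiv\KnKrom\equiv\KnKromDia$ forces this).
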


\begin{proof}
As we have seen in Theorem~\ref{th:hornbox_vs_boxdia}, the \KnCoreDia-formula (which is also a \KnHornDia-formula) $\diax p$ cannot be translated into \KnHornBox (and therefore it cannot be translated to \KnCoreBox either), over any finite extension of the propositional alphabet, and, as we have seen in Theorem~\ref{th:horndia_vs_boxdia}, the  \KnCoreBox-formula $\boxx p\rightarrow q$ (which is also a \KnHornBox-formula) cannot be translated into \KnHornDia (and therefore it cannot be translated to \KnCoreDia either), over any finite extension of the propositional alphabet. These two observations, together, show that we cannot compare  \KnHornBox with \KnHornDia, nor \KnCoreBox with \KnCoreDia. Similarly, Theorem~\ref{th:krombox_vs_krom} proves that the \KnKromDia-formula $\diax p$ cannot be translated to \KnKromBox, and Theorem~\ref{th:kromdia_vs_boxdia} proves that the \KnKromBox-formula $\boxx p\rightarrow q$ cannot be translated to \KnKromDia, all this within the same propositional alphabet; these two observations, together, imply that, at least within the same propositional alphabet, we cannot compare \KnKromBox and \KnKromDia, either.
\end{proof}

\begin{corollary}
The following results hold:

\medskip

\begin{compactenum}
\item $\KnHorn^\clubsuit,\KnHorn$ cannot be $\preceq^w$-compared with $\KnKrom^\spadesuit,\KnKrom$, and viceversa, for $\clubsuit,\spadesuit\in\{\Box,\lozenge\}$;
\item $\KnCoreBox\prec^w\KnHornBox$ and $\KnCoreDia\prec^w\KnHornDia$;
\item $\KnCoreBox,\KnCoreDia\prec\KnKrom$, $\KnKromBox$, $\KnKromDia$.
\end{compactenum}
\end{corollary}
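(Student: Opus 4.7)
The plan is to recycle the witness formulas $p \lor q$ and $p \land q \rightarrow r$ already used in Theorems~\ref{th:horn_vs_bool} and~\ref{th:krom_vs_bool}: since neither contains any modal operator, $p \lor q$ belongs syntactically to \KnKrom, \KnKromBox\ and \KnKromDia\ (the box/diamond restriction on positive literals is vacuous here), and similarly $p \land q \rightarrow r$ belongs to \KnHorn, \KnHornBox\ and \KnHornDia. The three items then reduce to combining earlier theorems, the two closure lemmas, and elementary syntactic containment.

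For item~(1), Theorem~\ref{th:horn_vs_bool} already shows that $p \lor q$ has no \KnHorn-translation over $\{p,q\}$, and since \KnHornBox\ and \KnHornDia\ are syntactic fragments of \KnHorn, no such translation exists there either; symmetrically, Theorem~\ref{th:krom_vs_bool} rules out any translation of $p \land q \rightarrow r$ into \KnKrom, \KnKromBox\ or \KnKromDia. For item~(2), the inclusion $\KnCoreBox \preceq^w \KnHornBox$ is syntactic, and strictness uses the witness $p \land q \rightarrow r \in \KnHornBox$: since $\KnCoreBox \subseteq \KnKrom$, any putative \KnCoreBox-translation of this formula would also be a \KnKrom-translation, contradicting Theorem~\ref{th:krom_vs_bool}. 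The diamond case $\KnCoreDia \prec^w \KnHornDia$ is handled identically via $\KnCoreDia \subseteq \KnKrom$.

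For item~(3), I need to establish both the $\preceq$-direction and strictness for each of the six pairs. The $\preceq$-direction is obtained either by straight syntactic inclusion ($\KnCoreBox \subseteq \KnKromBox \subseteq \KnKrom$ and $\KnCoreDia \subseteq \KnKromDia \subseteq \KnKrom$) or, for the cross cases such as $\KnCoreBox \preceq \KnKromDia$, by composing inclusion with the strong equivalences $\KnKromBox \equiv \KnKrom \equiv \KnKromDia$ from Theorems~\ref{th:krombox_vs_krom} and~\ref{th:kromdia_vs_boxdia}. For strictness, I would again take $\psi = p \lor q$ as witness, as it sits in all three Krom fragments. To rule out a \KnCoreBox-translation $\varphi$ over any finite extension $\mathcal{P}' \supseteq \{p,q\}$, I would invoke Lemma~\ref{lem:intersection} (using $\KnCoreBox \subseteq \KnHornBox$) on two one-point models $M_1,M_2$ over the same trivial frame with $V_1(w)=\{p\}$ and $V_2(w)=\{q\}$: both satisfy $\psi$, so there must be extensions $M_1^{\mathcal{P}'},M_2^{\mathcal{P}'}$ satisfying $\varphi$ at $w$; by Lemma~\ref{lem:intersection} their intersection also satisfies $\varphi$, yet its restriction to $\{p,q\}$ has empty valuation at $w$ and falsifies $\psi$, contradicting translationhood. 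The same witness, with two one-point models on \emph{separate} frames combined as a product via Lemma~\ref{lem:product}, delivers the strictness for \KnCoreDia.

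The main obstacle I anticipate is the bookkeeping around alphabet extensions: one must verify that restricting the intersection (or product) of the chosen extensions $M_i^{\mathcal{P}'}$ back to $\{p,q\}$ still yields empty valuation at the relevant world, which follows because extensions preserve the original valuation on $\mathcal{P}$ and both intersection and product act by intersecting valuations pointwise. Beyond this, the argument is a purely mechanical combination of the theorems and lemmas already established in the paper.
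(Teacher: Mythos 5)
Your proposal is correct. Items (1) and (2) follow the paper's own route exactly: incomparability via the modality-free witnesses $p\lor q$ and $p\wedge q\rightarrow r$ from Theorems~\ref{th:horn_vs_bool} and~\ref{th:krom_vs_bool}, and strictness of $\KnCoreBox\prec^w\KnHornBox$ (resp.\ $\KnCoreDia\prec^w\KnHornDia$) because a core translation of $p\wedge q\rightarrow r$ would in particular be a Krom translation. For item (3) you diverge in the strictness argument: the paper recycles the witnesses $\diax p$ and $\boxx p\rightarrow q$, whose non-translatability into \KnCoreBox and \KnCoreDia was already established inside the proofs of Theorems~\ref{th:hornbox_vs_boxdia} and~\ref{th:horndia_vs_boxdia}, and transfers it to \KnKromBox and \KnKromDia through the strong equivalences of Theorems~\ref{th:krombox_vs_krom} and~\ref{th:kromdia_vs_boxdia}, whereas you apply Lemmas~\ref{lem:intersection} and~\ref{lem:product} directly to the single propositional witness $p\lor q$ with two one-point models ($V_1(w)=\{p\}$, $V_2(w)=\{q\}$), whose intersection (resp.\ product) makes both letters false at the distinguished world while still satisfying the alleged translation. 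Both arguments are sound and rest on the same closure lemmas; yours buys a small simplification, since $p\lor q$ lies syntactically in all three Krom fragments, so the equivalences $\KnKrom\equiv\KnKromBox\equiv\KnKromDia$ (and the implicit composability of strong translations) are needed only for the positive $\preceq$ direction of the cross cases such as $\KnCoreBox\preceq\KnKromDia$, not for strictness, while the paper needs them also to move its witnesses across the Krom variants; the price is that you must actually carry out the (easy) new intersection/product constructions together with the alphabet-restriction bookkeeping you correctly flagged, rather than just citing the earlier theorems.
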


\begin{proof}
As far as the first result is concerned, as we have seen in Theorem~\ref{th:horn_vs_bool}, the formula $p\vee q$, which belongs to all sub-Krom fragments of \KnBool, cannot be translated to \KnHorn, and, therefore, it cannot be translated to any sub-Horn fragment either, at least within the same propositional alphabet.  Theorem~\ref{th:krom_vs_bool}, on the other hand, proves that the formula $(p\wedge q)\rightarrow r$, which belongs to all sub-Horn fragments of \KnBool, cannot be translated to \KnKrom, and, therefore, it cannot be translated to any sub-Krom fragment either, at least within the same propositional alphabet. These two observations, together, imply that the claim holds. Thanks to the above result, an taking into account that $\KnCoreBox = \KnHornBox \cap \KnKromBox$ and that $\KnCoreDia = \KnHornDia \cap \KnKromDia$, the second claim immediately follow. Finally, to prove the third result it is sufficient to recall that the proof of Theorem~\ref{th:hornbox_vs_boxdia} shows that the \KnKrom-formula $\diax p$ cannot be translated into \KnCoreBox (over any finite extension of the propositional alphabet), while the proof of Theorem~\ref{th:horndia_vs_boxdia} shows that the \KnKrom-formula $\boxx p \rightarrow q$ cannot be translated into \KnCoreDia (over any finite extension of the propositional alphabet). Thanks to Theorem~\ref{th:krombox_vs_krom} and Theorem~\ref{th:kromdia_vs_boxdia} we know that $\KnKrom \equiv \KnKromBox \equiv \KnKromDia$ and we have the claim.
\end{proof}

%\section{Sub-Propositional Point-Based and Interval-Based Temporal Logics}
%%\section{Expressive Power of Sub-propositional Fragments of Temporal Logics}\label{sec:tl_expr}
%\input{tlexpr}

\section{Conclusions}\label{sec:concl}
%!TEX root =  gandalf2016.tex

In this paper we studied the relative expressive power of several sub-propositional fragments of the multi-modal logic \Kn. Inspired by recent work on sub-propositional fragments of temporal and description logic~\cite{DBLP:conf/lpar/ArtaleKRZ13,DBLP:journals/tocl/ArtaleKRZ14,artale2015,DBLP:conf/jelia/BresolinMS14}, we defined the Horn and the Krom fragments of modal logic, and their box and diamond fragments. We compared the relative expressive power of the fragments at two different levels, characterized by respectively allowing or not allowing new propositional letters in the translations, and the results are shown in Fig.~\ref{fig:exprpower}. In most cases relative expressivity coincides with syntactical containment, with the notable exception of the Krom fragments, that are expressively equivalent, but not weakly expressively equivalent. Because of our very general approach for comparing the expressive power of languages, most of our result can be transferred to other sub-propositional modal logic such as the fragments of \LTL without Since and Until studied in~\cite{DBLP:conf/lpar/ArtaleKRZ13} and the sub-propositional fragments of \HS~\cite{artale2015,bresolin:2016,DBLP:conf/jelia/BresolinMS14}. To the best of our knowledge, this is the first work where sub-Krom and sub-Horn fragments of \Kn have been considered. 

As future work, it would be desirable to complete this picture relatively to the strong hierarchy (although extending the current results do not seem trivial), and to study the complexity of the fragments that are expressively weaker or incomparable to \KnHorn. Because of their lower expressive power, the satisfiability problem for sub-Krom and sub-Horn fragments may have a lower complexity than full \Kn, as our preliminary results seem to suggest, not only in the case of \Kn, but, also, for some of its most common axiomatic extensions. 

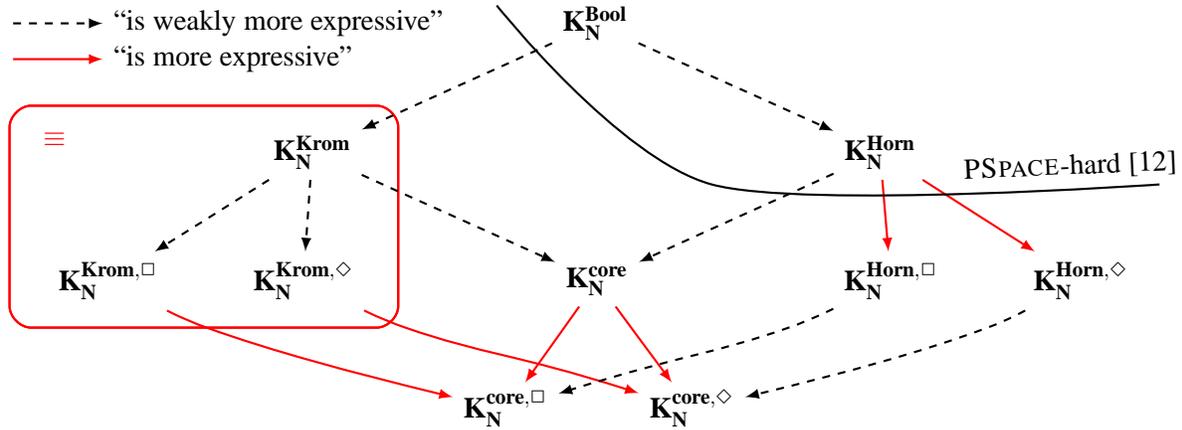
\begin{figure}[t!]
   \centering
\begin{tikzpicture}[>=latex,line join=bevel,scale=0.675,thick]
\begin{scope}
  \pgfsetstrokecolor{black}
  \definecolor{strokecol}{rgb}{1.0,0.0,0.0};
  \pgfsetstrokecolor{strokecol}
  \draw (20.0bp,80.0bp) .. controls (20.0bp,80.0bp) and (213.0bp,80.0bp)  .. (213.0bp,80.0bp) .. controls (219.0bp,80.0bp) and (225.0bp,86.0bp)  .. (225.0bp,92.0bp) .. controls (225.0bp,92.0bp) and (225.0bp,192.0bp)  .. (225.0bp,192.0bp) .. controls (225.0bp,198.0bp) and (219.0bp,204.0bp)  .. (213.0bp,204.0bp) .. controls (213.0bp,204.0bp) and (20.0bp,204.0bp)  .. (20.0bp,204.0bp) .. controls (14.0bp,204.0bp) and (8.0bp,198.0bp)  .. (8.0bp,192.0bp) .. controls (8.0bp,192.0bp) and (8.0bp,92.0bp)  .. (8.0bp,92.0bp) .. controls (8.0bp,86.0bp) and (14.0bp,80.0bp)  .. (20.0bp,80.0bp);
\end{scope}
\begin{scope}
  \pgfsetstrokecolor{black}
  \definecolor{strokecol}{rgb}{1.0,0.0,0.0};
  \pgfsetstrokecolor{strokecol}
  \draw (20.0bp,80.0bp) .. controls (20.0bp,80.0bp) and (213.0bp,80.0bp)  .. (213.0bp,80.0bp) .. controls (219.0bp,80.0bp) and (225.0bp,86.0bp)  .. (225.0bp,92.0bp) .. controls (225.0bp,92.0bp) and (225.0bp,192.0bp)  .. (225.0bp,192.0bp) .. controls (225.0bp,198.0bp) and (219.0bp,204.0bp)  .. (213.0bp,204.0bp) .. controls (213.0bp,204.0bp) and (20.0bp,204.0bp)  .. (20.0bp,204.0bp) .. controls (14.0bp,204.0bp) and (8.0bp,198.0bp)  .. (8.0bp,192.0bp) .. controls (8.0bp,192.0bp) and (8.0bp,92.0bp)  .. (8.0bp,92.0bp) .. controls (8.0bp,86.0bp) and (14.0bp,80.0bp)  .. (20.0bp,80.0bp);
\end{scope}
  \node (Core) at (336.0bp,106.0bp) [draw,draw=none] {$\KnCore$};
  \node (Bool) at (335.0bp,250.0bp) [draw,draw=none] {$\KnBool$};
  \node (CoreBox) at (285.0bp,34.0bp) [draw,draw=none] {$\KnCoreBox$};
  \node (Krom) at (177.0bp,178.0bp) [draw,draw=none] {$\KnKrom$};
  \node (HornBox) at (500.0bp,106.0bp) [draw,draw=none] {$\KnHornBox$};
  \node (Horn) at (494.0bp,178.0bp) [draw,draw=none] {$\KnHorn$};
  \node (KromDia) at (172.0bp,106.0bp) [draw,draw=none] {$\KnKromDia$};
  \node (HornDia) at (606.0bp,106.0bp) [draw,draw=none] {$\KnHornDia$};
  \node (KromBox) at (63.0bp,106.0bp) [draw,draw=none] {$\KnKromBox$};
  \node (CoreDia) at (389.0bp,34.0bp) [draw,draw=none] {$\KnCoreDia$};
  %\draw [red,->,solid] (Bool) -- (CoreBox);
  %\draw [red,->,solid] (Bool) -- (HornBox);
  \draw [->,dashed] (Krom) ..controls (175.21bp,151.98bp) and (174.55bp,142.71bp)  .. (KromDia);
  \draw [red,->,solid] (Horn) ..controls (496.14bp,151.98bp) and (496.94bp,142.71bp)  .. (HornBox);
  \draw [red,->,solid] (Horn) ..controls (535.91bp,150.81bp) and (553.92bp,139.55bp)  .. (HornDia);
  \draw [red,->,solid] (KromDia) ..controls (215.2bp,85.116bp) and (222.26bp,82.327bp)  .. (229.0bp,80.0bp) .. controls (273.42bp,64.667bp) and (288.24bp,65.91bp)  .. (CoreDia);
  \draw [->,dashed] (HornDia) ..controls (566.65bp,85.093bp) and (560.19bp,82.312bp)  .. (554.0bp,80.0bp) .. controls (517.07bp,66.216bp) and (474.06bp,54.626bp)  .. (CoreDia);
  \draw [->,dashed] (Bool) ..controls (279.09bp,224.23bp) and (246.9bp,209.97bp)  .. (Krom);
  \draw [red,->,solid] (KromBox) ..controls (104.67bp,85.136bp) and (111.49bp,82.341bp)  .. (118.0bp,80.0bp) .. controls (155.13bp,66.659bp) and (198.15bp,55.157bp)  .. (CoreBox);
  \draw [->,dashed] (Krom) ..controls (234.17bp,151.83bp) and (263.35bp,138.99bp)  .. (Core);
  \draw [->,dashed] (Horn) ..controls (438.57bp,152.44bp) and (408.8bp,139.25bp)  .. (Core);
  \draw [red,->,solid] (Core) ..controls (317.4bp,79.474bp) and (310.12bp,69.483bp)  .. (CoreBox);
  \draw [->,dashed] (Bool) ..controls (391.5bp,224.12bp) and (425.12bp,209.32bp)  .. (Horn);
  \draw [->,dashed] (Krom) ..controls (134.21bp,150.72bp) and (115.67bp,139.34bp)  .. (KromBox);
  \draw [red,->,solid] (Core) ..controls (355.39bp,79.389bp) and (363.05bp,69.277bp)  .. (CoreDia);
  \draw [->,dashed] (HornBox) ..controls (456.57bp,85.077bp) and (449.63bp,82.322bp)  .. (443.0bp,80.0bp) .. controls (400.18bp,64.998bp) and (386.05bp,65.476bp)  .. (CoreBox);
%
%% Legenda
%
	\node (Equiv) at (33.0bp,185.0bp) [red] {$\equiv$};
	\draw [->,dashed] (10.0bp,250.0bp) -- (60.0bp,250.0bp) node[right] {``is weakly more expressive"};
%\node (l1) at (535.0bp,230.0bp) {``is weakly more expressive"};
	\draw [red,->,solid] (10.0bp,230.0bp) -- (60.0bp,230.0bp) node[right, black] {``is more expressive"};
%\node (l2) at (555.0bp,195.0bp) {``is more expressive"};
\draw  plot [smooth] coordinates {(280bp,260bp) (400bp,160bp) (650bp,160bp) };
\node[rotate=2.5] at (600bp,170bp) {$\PSpace$-hard~\cite{ChenLin94}};
\end{tikzpicture}
	\caption{An account of the results of this paper.}
   \label{fig:exprpower}
\end{figure} 

\bibliographystyle{eptcs}
\bibliography{biblio}

\end{document}